\newcommand{\etal}{et~al.\xspace}
\title{Optimal Morphs of Planar Orthogonal Drawings}
\author{A. van Goethem}{TU Eindhoven\\{Eindhoven, the Netherlands}}{a.i.v.goethem@tue.nl}{}{}%mandatory, please use full name; only 1 author per \author macro; first two parameters are mandatory, other parameters can be empty.
\author{K. Verbeek}{TU Eindhoven\\{Eindhoven, the Netherlands}}{k.a.b.verbeek@tue.nl}{}{[K. Verbeek
is supported by the Netherlands Organisation for Scientific Research (NWO) under
project no. 639.021.541.]}
\authorrunning{A. van Goethem and K. Verbeek} %mandatory. First: Use abbreviated first/middle names. Second (only in severe cases): Use first author plus 'et. al.'
\subjclass{Theory of computation $\rightarrow$ Graph algorithms analysis}% mandatory: Please choose ACM 2012 classifications from https://www.acm.org/publications/class-2012 or https://dl.acm.org/ccs/ccs_flat.cfm . E.g., cite as "General and reference $\rightarrow$ General literature" or \ccsdesc[100]{General and reference~General literature}.
\keywords{Homotopy, Morphing, Orthogonal drawing, Spirality}% mandatory: Please provide 1-5 keywords
\begin{document}

\maketitle
\begin{abstract}
We describe an algorithm that morphs between two planar orthogonal drawings $\Gamma_I$ and $\Gamma_O$ of a connected graph $G$, while preserving planarity and orthogonality. Necessarily $\Gamma_I$ and $\Gamma_O$ share the same combinatorial embedding. Our morph uses a linear number of linear morphs (linear interpolations between two drawings) and preserves linear complexity throughout the process, thereby answering an open question from Biedl~\etal~\cite{Biedl2013}.

Our algorithm first \emph{unifies} the two drawings to ensure an equal number of (virtual) \emph{bends} on each edge. We then interpret bends as vertices which form obstacles for so-called \emph{wires}: horizontal and vertical lines separating the vertices of $\Gamma_O$. We can find corresponding wires in $\Gamma_I$ that share topological properties with the wires in $\Gamma_O$. The structural difference between the two drawings can be captured by the \emph{spirality} of the wires in $\Gamma_I$, which guides our morph from $\Gamma_I$ to $\Gamma_O$.
\end{abstract}

\section{Introduction}
A morph is a continuous transformation between two objects. It is most effective, from a visual point of view, if the number of steps during the transformation is small, if no step is overly complex, and if the morphing object retains some similarity to input and output throughout the process. These visual requirements can be translated to a variety of algorithmic requirements that depend on the type of object to be morphed.

In this paper we focus on morphs between two planar orthogonal drawings of a connected graph $G$ with complexity $n$. In this setting the visual requirements for a good morph can be captured as follows: few (ideally at most linearly many) steps in the morph, each step is a simple (ideally linear) morph, and each intermittent drawing is a planar orthogonal drawing of $G$ with complexity $O(n)$. Biedl~\etal~\cite{Biedl2006} presented some of the first results on this topic, for the special case of \emph{parallel} drawings: two graph drawings are parallel when every edge has the same orientation in both drawings. The authors proved that there exists a morph, which is composed of $O(n)$ linear morphs, between two parallel drawings that maintains parallelity and planarity for orthogonal drawings.
More recently, Biedl~\etal~\cite{Biedl2013} described a morph, composed of $O(n^2)$ linear morphs, between two planar orthogonal drawings that preserves planarity, orthogonality, and linear complexity during the morph.
The authors also present a lower bound example requiring $\Omega(n^2)$ linear morphs when morphed with their method.

In this paper we present a significant improvement upon their work: we describe a morph, which is composed of $O(n)$ linear morphs, between two planar orthogonal drawings which preserves planarity, orthogonality, and linear complexity during the morph. This bound is tight as directly follows from the general lowerbound for straight-line graphs proven in~\cite{Alamdari2016}.

\subparagraph*{Related work} Morphs of planar graph drawings have been studied extensively, below we review some of the most relevant results. Cairns showed already in 1944~\cite{Cairns1944} that there exists a planarity-preserving continuous morph between any two (compatible) triangulations that have the same outer triangle. His proof is constructive and results in an exponential time algorithm to find such a morph. These results were extended in 1983 by Thomassen~\cite{Thomassen1983} who showed that two compatible straight-line drawings can be morphed into each other while maintaining planarity (still using exponential time). Thomassen also proved that two rectilinear polygons with the same turn sequence can be transformed into each other using a sequence of linear morphs. Much more recently, Angelini~\etal~\cite{Angelini2013} proved that there is a morph between any pair of planar straight-line drawings of the same graph (with the same embedding) using $O(n^2)$ linear morphs. Finally, Alamdari~\etal~\cite{Alamdari2016} improved this result to $O(n)$ uni-directional linear morphs, which is optimal. Creating this morph takes $O(n^3)$ time. It does create intermediate drawings which need to be represented by a superlogarithmic number of bits, leaving as a final open question if it is possible to morph two planar straight-line drawings using a linear number of linear morphs while using a logarithmic number of bits per coordinate to represent intermediate drawings. Note that, since intermediate drawings are not orthogonal, we cannot apply this approach to our setting. Our approach relies heavily on the spirality of the drawings. This concept of spirality has already received significant attention in the area of bend-minimization (e.g.,~\cite{Blasius2016,DiBattista1998,Didimo2009}).

\subparagraph*{Paper outline} Our input consists of two planar orthogonal drawings $\Gamma_I$ and $\Gamma_O$ of a connected graph $G$, which share the same combinatorial embedding. In Section~\ref{sec:prelim} we first give all necessary definitions and then explain how to create a \emph{unified} graph $G$: we add ``virtual'' bends to edges to ensure that each edge is drawn in $\Gamma_I$ and $\Gamma_O$ with the same number of bends. We then interpret each bend as a vertex of the unified graph $G$.
$\Gamma_I$ and $\Gamma_O$ are now orthogonal straight-line drawings of the unified graph $G$.
Clearly the maximum complexity of $\Gamma_I$ and $\Gamma_O$ is still bounded by $O(n)$ after the unification process.

Our main tool are so-called \emph{wires} which are introduced in Section~\ref{sec:wires}. Wires capture the horizontal and vertical order of the vertices. Specifically, we consider a set of horizontal and vertical lines that separate the vertices of $\Gamma_O$. If we consider the vertices of $\Gamma_O$ as obstacles, then these wires define homotopy classes with respect to the vertices of $G$ (for the combinatorial embedding of $G$ shared by $\Gamma_I$ and $\Gamma_O$). These homotopy classes can be represented by orthogonal polylines (also called wires) in $\Gamma_I$ using orthogonal shortest and lowest paths as defined by Speckmann and Verbeek~\cite{Speckmann2018}. A theorem by Freedman, Hass, and Scott~\cite{Freedman1982} proves that the resulting paths minimize crossings.

Intuitively our morph is simply straightening the wires in $\Gamma_I$ using the \emph{spirality} (the difference between the number of left and right turns) of the wires as a guiding principle.
In Section~\ref{sec:linear} we show how this approach leads more or less directly to a linear number of linear morphs. However, the complexity of the intermediate drawings created by this algorithm might increase to $\Theta(n^2)$.
In Section~\ref{sec:linearComplexity} we show how to refine our approach, to arrive at a linear number of linear morphs which preserve linear complexity of the intermediate drawings.

\subparagraph*{Relation to Biedl~\etal~\cite{Biedl2013}}
While the underlying principle of our algorithm is quite different, there are certain similarities between our approach and the one employed by Biedl~\etal~\cite{Biedl2013}. As mentioned, there is a lower bound that proves their method cannot do better than $O(n^2)$ linear morphs in general. We sketch why this lower bound does not apply to our algorithm.

\begin{figure}[t]
  \centering
  \includegraphics[]{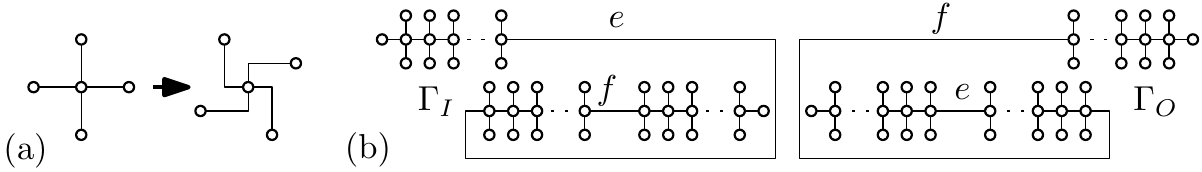}
  \caption{\textbf{(a)} A vertex twist (from~\cite{Biedl2013}). \textbf{(b)} The input and output for the lowerbound example from~\cite{Biedl2013}. There are three parts each containing a linear number of vertices. Edge $e$ in $\Gamma_I$, respectively $f$ in $\Gamma_O$, has a linear number of bends (only four drawn in example).}
  \label{fig:twist}
\end{figure}

To ensure that the spirality of all edges is the same in both the input and the output, the algorithm by Biedl~\etal ``twists'' the vertices (see Fig.~\ref{fig:twist}(a)). The lowerbound described in~\cite{Biedl2013} (see Fig.~\ref{fig:twist}(b)) shows that it may be necessary to twist a linear number of vertices a linear number of times. The complexity introduced in the edges causes another quadratic number of linear moves to keep the complexity of the drawing low.

Our approach must overcome the same problem: a linear number of vertices (edges) might need to be rotated a linear number of times. The crucial difference is that our algorithm can rotate a linear number of vertices (edges) at once, using only $O(1)$ linear morphs. We do not require the edges to have the correct spirality at the start of the morph.
Instead we combine the twisting (rotating) of the vertices with linear moves on the edges and pick a suitable order for rotations based on the spirality of the complete drawing. 
As a result we can change the spirality of a linear number of edges in $\Theta(1)$ linear morphs, and we can  rotate a linear number of vertices in $\Theta(1)$ linear morphs.
In the full version of the paper we show how our algorithm works on the lowerbound example of~\cite{Biedl2013}.

\section{Preliminaries}\label{sec:prelim}
\subparagraph*{Orthogonal drawings}
A \emph{drawing} $\Gamma$ of a graph $G = (V, E)$ is a mapping of each vertex to a distinct point in the plane and each edge $(u, v)$ to a curve in the plane connecting $\Gamma(u)$ and $\Gamma(v)$. A drawing is \emph{orthogonal} if each curve representing an edge is an orthogonal polyline consisting of horizontal and vertical segments, and a drawing is \emph{planar} if no two curves representing edges intersect in an internal point. Two drawings $\Gamma$ and $\Gamma'$ of the same graph $G$ have the same \emph{combinatorial embedding} if at every vertex of $G$ the cyclic order of incident edges is the same in both $\Gamma$ and $\Gamma'$.

Let $\Gamma$ and $\Gamma'$ be two planar drawings with the same combinatorial embedding. A \emph{linear morph} $\Gamma_t$ ($0 \leq t \leq 1$) from $\Gamma$ to $\Gamma'$ consists of a linear interpolation between the two drawings $\Gamma$ and $\Gamma'$, that is, $\Gamma_0 = \Gamma$, $\Gamma_1 = \Gamma'$, and for each vertex $v$, $\Gamma_t(v) = (1-t) \Gamma(v) + t \Gamma'(v)$. A linear morph \emph{maintains planarity} if all intermediate drawings $\Gamma_t$ are also planar. Note that a linear morph from $\Gamma$ to $\Gamma'$ may not maintain planarity even if $\Gamma$ and $\Gamma'$ are planar, and that the linear morph may maintain planarity only if $\Gamma$ and $\Gamma'$ have the same combinatorial embedding.
Therefore, a morph between two planar drawings that maintains planarity generally has to be composed of several linear morphs.

\subparagraph*{Slides}
Following Biedl~\etal~\cite{Biedl2013} we generally use two types of linear morphs: zigzag-eliminating slides and bend-creating slides.
Let a \emph{bend} be the shared endpoint of two consecutive segments of an edge.
A \emph{zigzag} consists of three segments joined by two consecutive bends $\beta,\gamma$ that form a left followed by a right bend, or vice versa.
We call zigzags starting and ending with a horizontal segment \emph{horizontal zigzags} (see Fig.~\ref{fig:slides}(a)), and the rest \emph{vertical zigzags}.
We consider the situation with bends $\beta,\gamma$ of Fig.~\ref{fig:slides}(a), other situations are symmetric. Let $\mathcal{V}$ be the set of vertices and bends that are strictly left of $\beta$ and above or at the same height as $\beta$, or strictly above $\gamma$. Also include $\beta$ in $\mathcal{V}$. 
A \emph{zigzag-eliminating slide} moves all points in $\mathcal{V}$ up by the initial distance between $\beta$ and $\gamma$ (see Fig.~\ref{fig:slides}(b)). 
A zigzag-eliminating slide is a linear morph and it straightens the zigzag to a single horizontal or vertical line. The morph always maintains planarity between two drawings.

Inversely, a \emph{bend-creating slide} is a morph that introduces a zigzag in a horizontal or vertical line (see Fig.~\ref{fig:slides}(c)). It can be perceived as the inverse operation of a zigzag-eliminating slide and for similar reasoning is a linear morph that maintains planarity.

\begin{figure}[t]
  \centering
  \includegraphics[]{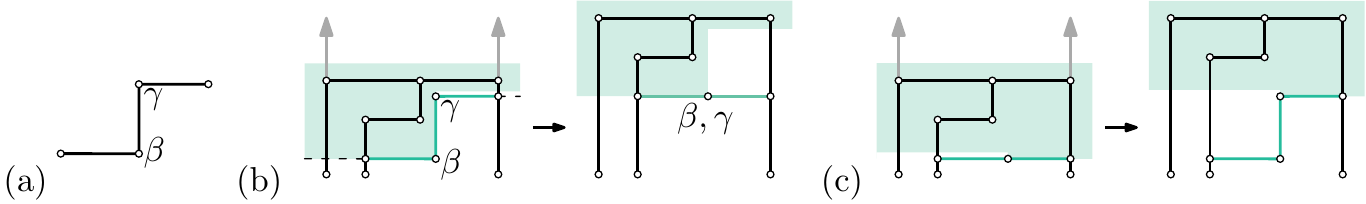}
  \caption{\textbf{(a)} A horizontal \emph{zigzag}. \textbf{(b)} A \emph{zigzag-eliminating} slide is a linear morph straightening a zigzag. \textbf{(c)} A \emph{bend-creating} slide is a linear morph that introduces a zigzag.}
  \label{fig:slides}
\end{figure}

\subparagraph*{Homotopic paths}
Our morphing algorithm heavily relies on the concept of \emph{wires} among the vertices of the drawings, and wires are linked up between different drawings via their homotopy classes. We consider the vertices of a drawing as the set of obstacles $B$. Let $\pi_1, \pi_2\colon [0,1] \rightarrow \mathbb{R}^2 \setminus B$ be two paths in the plane avoiding the vertices. We say that $\pi_1$ and $\pi_2$ are \emph{homotopic} (notation $\pi_1 \sim_h \pi_2$) if they have the same endpoints and there exists a continuous function avoiding $B$ that deforms $\pi_1$ into $\pi_2$. More specifically, there exists a function $\Pi:[0,1] \times [0,1] \rightarrow \mathbb{R}^2$ such that
\begin{itemize}
\item $\Pi(0,t) = \pi_1(t)$ and $\Pi(1,t) = \pi_2(t)$ for all $0\leq t\leq 1$.
\item $\Pi(s,0) = \pi_1(0) = \pi_2(0)$ and $\Pi(s,1) = \pi_1(1) = \pi_2(1)$ for all $0\leq s\leq 1$.
\item $\Pi(\lambda,t) \notin B$ for all $0 \leq \lambda \leq 1$, $0 \leq t \leq 1$.
\end{itemize}
Since the homotopic relation is an equivalence relation, every path belongs to a \emph{homotopy class}. The \emph{geometric intersection number} of a pair of paths $\pi_1,\pi_2$ is the minimum number of intersections between any pair of paths homotopic to $\pi_1$, respectively $\pi_2$. Freedman, Hass, and Scott proved the following theorem\footnote{Reformulated (and simplified) to suit our notation rather than the more involved notation in~\cite{Freedman1982}.}.
\begin{theorem}[from~\cite{Freedman1982}]\label{thm:shortestbest}
Let $M^2$ be a closed, Riemannian 2-manifold, and let $\sigma_1 \subset M^2$ and $\sigma_2 \subset M^2$ be two shortest loops of their respective homotopy classes. If $\pi_1 \sim_h \sigma_1$ and $\pi_2 \sim_h \sigma_2$, then the number of crossings between $\sigma_1$ and $\sigma_2$ is at most the number of crossings between $\pi_1$ and $\pi_2$.
\end{theorem}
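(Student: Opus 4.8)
The plan is to establish the contrapositive in a local form: if the shortest (hence geodesic) representatives $\sigma_1,\sigma_2$ cross more often than some homotopic pair, then they cannot both be length‑minimizing, a contradiction. I would first discard the trivial cases (if $[\sigma_i]$ is null‑homotopic its shortest loop is a point, has no crossings, and the claimed inequality is vacuous), assume $\sigma_1\neq\sigma_2$ as subsets, and record that geodesics are in general position automatically: a point together with a tangent direction determines a geodesic, so two distinct geodesics meet transversally in finitely many points. Now suppose toward a contradiction that $\sigma_1$ and $\sigma_2$ are \emph{not} in minimal position relative to their homotopy classes. I would invoke the \emph{bigon criterion}: two transverse curves on a surface realize the minimum intersection number in their homotopy classes iff no two of their lifts to the universal cover $\widetilde{M^2}$ co‑bound an embedded bigon (for an arc version, also no monogon). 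Failure of minimal position therefore produces lifts $\widetilde\sigma_1,\widetilde\sigma_2$ and an embedded disk $D\subset\widetilde{M^2}$ bounded by a sub‑arc $a_1\subset\widetilde\sigma_1$ and a sub‑arc $a_2\subset\widetilde\sigma_2$ meeting only at their common endpoints $p,q$, with both interior angles strictly between $0$ and $\pi$; passing to an innermost such $D$, I may assume its interior meets neither $\widetilde\sigma_1$ nor $\widetilde\sigma_2$.

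The sides $a_1$ and $a_2$ are then geodesic arcs from $p$ to $q$ co‑bounding the disk $D$, hence homotopic rel endpoints. Surgering $\sigma_1$ — swapping the $a_1$‑portion for $a_2$ — yields a loop $\sigma_1'$ freely homotopic to $\sigma_1$ of length $\ell(\sigma_1)-\ell(a_1)+\ell(a_2)$, so minimality of $\sigma_1$ forces $\ell(a_2)\geq\ell(a_1)$; the symmetric surgery on $\sigma_2$, using that $\sigma_2$ is also shortest, forces $\ell(a_1)\geq\ell(a_2)$. Hence $\ell(a_1)=\ell(a_2)$ and $\sigma_1'$ is again a shortest loop in its class. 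But $\sigma_1'$ has honest corners at $p$ and $q$: the relevant incoming and outgoing directions there lie along the distinct tangent lines of $\sigma_1$ and $\sigma_2$ at the crossing, so the angle is neither $0$ nor $\pi$; and a loop with a corner can be strictly shortened within its homotopy class by rounding the corner, contradicting the minimality of $\sigma_1'$. Therefore no such bigon exists, so $\sigma_1$ and $\sigma_2$ are in minimal position, i.e.\ they cross at most as often as any homotopic pair $\pi_1,\pi_2$, which is the assertion.

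It remains to patch two things. First, the universal‑cover argument needs the cases where the cover is not a plane — the sphere (where every class is trivial) and the projective plane (where the single essential class lifts to the sphere) — handled directly, and the existence of a shortest, smooth geodesic representative in each class follows from Arzel\`a--Ascoli on the closed manifold. Second, the way the theorem is actually used in this paper concerns obstacle‑avoiding \emph{wires}, i.e.\ arcs with fixed endpoints among point obstacles rather than loops on a closed manifold; there one deletes the obstacles, completes the metric so that each non‑peripheral homotopy class has a shortest geodesic arc, and reruns the same two steps with ``bigon or monogon'' replacing ``bigon''. The step I expect to be the genuine obstacle is the bigon criterion itself — choosing the appropriate cover, proving that excess intersections really force an \emph{embedded, innermost} bigon, and controlling general position — since essentially all of the surface‑topology content lives there; this is exactly why we state the result as quoted from~\cite{Freedman1982} rather than reproving it.
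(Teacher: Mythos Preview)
There is nothing to compare against: the paper does not prove this theorem. It is quoted as a result of Freedman, Hass, and Scott~\cite{Freedman1982} and used as a black box; the only accompanying text is the remark that the theorem extends to paths with fixed endpoints and to orthogonal shortest/lowest paths via~\cite{Speckmann2018}. You yourself note this in your final sentence.

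That said, your sketch is a faithful outline of the standard argument behind the cited result: reduce to geodesic representatives, observe that distinct geodesics meet transversally, and show that excess intersections force an innermost bigon whose sides can be swapped without increasing length, producing a cornered shortest loop that can be strictly shortened --- a contradiction. You correctly identify the bigon criterion (existence of an embedded innermost bigon from non-minimal intersection) as the place where the real surface-topology work sits, and that is indeed the substantive content of~\cite{Freedman1982}. Your handling of the degenerate cases (trivial classes, $S^2$, $\mathbb{RP}^2$) and the adaptation to arcs among point obstacles are appropriate remarks for the application in this paper, but none of this appears in the paper itself, which simply invokes the theorem.
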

In other words, the number of crossings between two loops of fixed homotopy classes are minimized by the shortest respective loops. This theorem can easily be extended to paths instead of loops, if we can consider the endpoints of the paths as obstacles. For orthogonal paths, the shortest path is not uniquely defined and the theorem cannot directly be applied. However, using \emph{lowest paths} the theorem still holds. Refer to~\cite{Speckmann2018} (Lemma 6) for details.

\subparagraph*{Conventions}
When morphing from a drawing $\Gamma$ to a drawing $\Gamma'$, the complexities (number of vertices and bends) of the two drawings may not be the same, as there is no restriction on the complexity of the orthogonal polylines representing the edges.
To simplify the discussion of our algorithm, we first ensure that every two orthogonal polylines in $\Gamma$ and $\Gamma'$ representing the same edge have the same number of segments. This can easily be achieved by subdividing segments, creating additional virtual bends. Next, we eliminate all bends by replacing them with vertices. As a result, all edges of the graph are represented by straight segments (horizontal or vertical) in both $\Gamma$ and $\Gamma'$, and there are no bends. We call the resulting graph the \emph{unification} of $\Gamma$ and $\Gamma'$. If the maximal complexity of $\Gamma$ and $\Gamma'$ is $O(n)$ then clearly the complexity of the unification of $\Gamma$ and $\Gamma'$ is~$O(n)$.

We say that two planar drawings $\Gamma$ and $\Gamma'$ of a unified graph are \emph{similar} if the horizontal and vertical order of the vertices is the same in both drawings. A planar drawing can be morphed to a similar planar drawing using a single linear morph while maintaining planarity.
We can only introduce a crossing if two vertices swap order in the horizontal or vertical direction, which cannot happen during a linear morph between two similar drawings.

Finally, when morphing between two planar drawings $\Gamma$ and $\Gamma'$ of a graph $G$, we assume that $\Gamma$ and $\Gamma'$ have the same combinatorial embedding and the same outer boundary. Furthermore, we assume that $G$ is connected. If $G$ is not connected, then we can use the result by Aloupis~\etal~\cite{Aloupis2015} to connect $G$ in a way that is compatible with both $\Gamma$ and $\Gamma'$. Doing so might increase the complexities of the drawings to $O(n^{1.5})$.

\section{Wires}\label{sec:wires}

In the following we assume that we want to morph an orthogonal planar drawing $\Gamma_I$ of $G=(V,E)$ to another orthogonal planar drawing $\Gamma_O$ of $G$ while maintaining planarity and orthogonality. We further assume that $\Gamma_I$ and $\Gamma_O$ have the same combinatorial embedding and the same outer boundary. We also assume that $G$ is connected, $G$ is the unification of $\Gamma_I$ and $\Gamma_O$, and that $G$ contains $n$ vertices.

To morph $\Gamma_I$ to $\Gamma_O$, our main strategy is to first make $\Gamma_I$ similar to $\Gamma_O$, after which we can morph $\Gamma_I$ to $\Gamma_O$ using a single linear morph. To capture the horizontal and vertical order of the vertices, we use two sets of \emph{wires}. The \emph{lr-wires} $W_{\rightarrow}$, going from left to right through the drawings, capture the vertical order of the vertices. The \emph{tb-wires} $W_{\downarrow}$, going from top to bottom through the drawings, capture the horizontal order of the vertices.

\begin{figure}[t]
  \centering
  \includegraphics[]{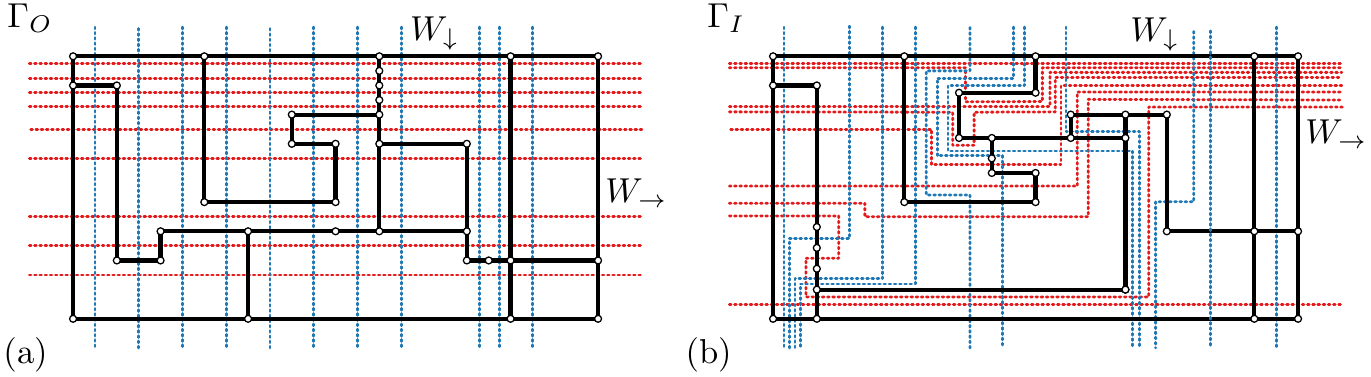}
  \caption{\textbf{(a)} The set $W_\rightarrow$ of lr-wires (red) and the set $W_\downarrow$ of tb-wires (blue) in the output drawing $\Gamma_O$. \textbf{(b)} The matching wires in the input drawing $\Gamma_I$.}
  \label{fig:endGrid}
\end{figure}

Since we want to match the horizontal and vertical order of vertices in $\Gamma_O$, the wires $W_{\rightarrow}$ and $W_{\downarrow}$ are simply horizontal and vertical lines in $\Gamma_O$, respectively, separating any two consecutive coordinates used by vertices (see Fig.~\ref{fig:endGrid}(a)).
Now assume that we have a planar morph from $\Gamma_I$ to $\Gamma_O$ (the existence of such a morph follows from~\cite{Biedl2013}).
If we were to apply this morph in the reverse direction on the wires in $\Gamma_O$, we end up with another set of wires in $\Gamma_I$ with the following properties (see Fig.~\ref{fig:endGrid}(b)): (1) the order of the wires in $W_{\rightarrow}$ ($W_{\downarrow}$) is the same as in $\Gamma_O$ and the same vertices are between consecutive wires, (2) two wires are non-crossing if they both belong to $W_{\rightarrow}$ or $W_{\downarrow}$ and cross exactly once otherwise, and (3) the wires cross exactly the same sequence of edges as in $\Gamma_O$.
These properties follow directly from the fact that a planar morph cannot introduce or remove any crossings, and thus these properties are invariant under planar morphs.
We say that a set of wires is \emph{proper} if it has the above properties.
Interestingly, any proper set of wires can be used to construct a planar morph from $\Gamma_I$ to $\Gamma_O$. We first use a planar morph to straighten the wires.
Then, by Property (1), the resulting drawing $\Gamma$ is similar to $\Gamma_O$, except that it may have some extra bends.
However, by Property (2) the wires form a grid where each cell contains at most one vertex, and the edges crossing the wires are correct by Property (3).
Hence, we can eliminate all bends in a single morph by combining individual morphs per cell.
For each cell we morph all bends (and the vertex) to the center of the cell.
The resulting drawing is similar to $\Gamma_O$ and has no bends, and thus we can finish the planar morph with a single linear morph. 

In the following we assume that we are given a proper set of wires in $\Gamma_I$. Our first goal is to straighten these wires. To keep the distinction between wires and edges clear, we refer to the horizontal and vertical segments of wires as \emph{links}. Note that even a single wire in $\Gamma_I$ may have $\Omega(n^2)$ links (see Fig.~\ref{fig:n2complexity}), so it is not efficient to straighten the wires one link at a time. To straighten the wires more efficiently, we consider the spirality of the wires. For a wire $w \in W_\rightarrow$, let $\ell_1 \ldots \ell_k$ be the links of $w$ in order from left to right. Furthermore, let $b_i$ be the orientation of the bend between $\ell_i$ and $\ell_{i+1}$, where $b_i = 1$ for a left turn, $b_i = -1$ for a right turn, and $b_i = 0$ otherwise. The \emph{spirality} of a link $\ell_i$ is defined as $s(\ell_i) = \sum_{j=1}^{i-1} b_j$. Note that, by definition, the spirality of $\ell_1$ is $0$, and by construction the spirality of $\ell_k$ is also $0$.
The \emph{spirality} of a wire is defined as the maximum absolute value of the spirality over all its links.
The spirality of wires in $W_\downarrow$ is defined analogously, going from top to bottom.

\begin{lemma}\label{lem:sameSpirality}
If a wire $w \in W_{\rightarrow}$ and a wire $w' \in W_{\downarrow}$ cross in links $\ell_i$ and $\ell'_j$, then $s(\ell_i)=s(\ell'_j)$.
\end{lemma}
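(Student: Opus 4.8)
The plan is to read spirality off as an accumulated \emph{turning} and then determine its value at the crossing by an argument on a simple closed curve assembled from $w$ and $w'$. First I would place $\Gamma_I$ together with all wires inside a bounding rectangle $B$, putting the two endpoints of every lr-wire on the left and right sides of $B$ and those of every tb-wire on the top and bottom sides (in the orders dictated by Property~(1)), and I orient each lr-wire from left to right and each tb-wire from top to bottom. Then the first link of an lr-wire leaves the left side of $B$ horizontally and its last link reaches the right side horizontally (symmetrically, the first and last links of a tb-wire are vertical), so in particular every wire meets $\partial B$ perpendicularly. Measuring turns in units of $90^\circ$ (a left turn is $+1$, a right turn is $-1$), the definition of $s$ says exactly that $s(\ell_i)$ is the total turning of $w$ from its first link up to $\ell_i$, and likewise $s(\ell'_j)$ along $w'$; recall also (as stated in the text) that the total turning of a wire from its first to its last link is $0$. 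Write $T(\gamma)$ for the total turning, in $90^\circ$-units, of an oriented orthogonal arc $\gamma$: it is additive under concatenation, is negated by reversing $\gamma$, vanishes on a straight link, and, by the theorem of turning tangents, equals $+4$ on a counterclockwise simple closed orthogonal curve and $-4$ on a clockwise one.

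Let $p$ be the unique crossing point of $w$ and $w'$; it lies in the interior of $\ell_i$ and of $\ell'_j$, and there one of the two links is horizontal and the other vertical. Let $a$ be the left endpoint of $w$, and $c$, $d$ the top and bottom endpoints of $w'$. I form two orthogonal arcs: $Q$, by following $w$ from $a$ to $p$ and then $w'$ in reverse from $p$ to $c$; and $Q'$, by following $w$ from $a$ to $p$ and then $w'$ forward from $p$ to $d$. Because $w$ and $w'$ are simple and meet only at $p$, both $Q$ and $Q'$ are simple arcs lying in $B$. The portion of $w$ from $a$ to $p$ has turning $s(\ell_i)$; the portion of $w'$ from $c$ to $p$ has turning $s(\ell'_j)$, and since $w'$ has total turning $0$ the portion from $p$ to $d$ has turning $-s(\ell'_j)$. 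Let $\epsilon\in\{+1,-1\}$ be the quarter-turn that $Q$ makes at $p$; the quarter-turn that $Q'$ makes at $p$ is then $-\epsilon$, since its outgoing direction there is opposite to that of $Q$. Hence
\[
  T(Q)=s(\ell_i)+\epsilon-s(\ell'_j),\qquad T(Q')=s(\ell_i)-\epsilon-s(\ell'_j).
\]

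The key step is to compute $T(Q)$ and $T(Q')$ on their own. Close $Q$ into a closed curve $\widehat Q$ by appending the part of $\partial B$ running from $c$ westward along the top side, through the top-left corner of $B$, and down the left side to $a$; since $Q$ meets $\partial B$ only at $a$ and $c$, the curve $\widehat Q$ is simple, and it is traversed counterclockwise (near $a$ it comes down the left side of $B$ and turns into the interior), so $T(\widehat Q)=+4$. The appended boundary part together with the two places where $Q$ meets $\partial B$ contributes three forced left quarter-turns — at $c$, at the top-left corner of $B$, and at $a$, all forced because $w$ and $w'$ meet $\partial B$ perpendicularly — i.e.\ a total of $+3$; therefore $T(Q)=+1$. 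Closing $Q'$ instead through the bottom-left corner of $B$ gives a clockwise simple closed curve, with $T=-4$ and a boundary contribution of three forced right quarter-turns, i.e.\ $-3$, so $T(Q')=-1$. Adding the two displayed identities gives $2(s(\ell_i)-s(\ell'_j))=0$, hence $s(\ell_i)=s(\ell'_j)$; subtracting them additionally shows $\epsilon=+1$, recovering that $w$ and $w'$ cross with the same orientation as in $\Gamma_O$.

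I expect the turning bookkeeping of the last paragraph to be the one genuinely delicate point: one has to verify that the auxiliary curves $\widehat Q$ and $\widehat{Q'}$ are indeed simple, fix their orientations, and keep track of every quarter-turn sign (the particular choice of endpoints $a,c,d$ and of corners of $B$ is only a convention, and the other variants are handled symmetrically). The remaining ingredients — the turning interpretation of $s$, the simplicity of $Q$ and $Q'$ coming from Property~(2), and additivity of turning under concatenation — are routine.
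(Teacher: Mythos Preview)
Your proof is correct and follows essentially the same approach as the paper: enclose the drawing in an axis-aligned rectangle, form a simple closed curve from pieces of $w$, $w'$, and the rectangle's boundary, and read off the relation $s(\ell_i)=s(\ell'_j)$ from the fact that the total turning of a simple closed orthogonal curve is $\pm 4$. The only difference is that the paper uses a single such curve (the boundary of the ``top-left face'') and asserts directly that the turn at the crossing is a left turn, whereas you form two curves (through the top-left and the bottom-left corners), treat the turn at $p$ as an unknown $\epsilon$, and eliminate it by adding; this is a slightly more careful variant of the same computation and even recovers $\epsilon=+1$ as a by-product.
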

\begin{proof}
By Property (2) $w$ and $w'$ cross exactly once.
Consider an axis-aligned rectangle $R$ that contains the complete drawing and that intersects the first and last link of both $w$ and $w'$.
By definition the spirality of $w$ and $w'$ is zero where they intersect $R$. The wires $w$ and $w'$ subdivide $R$ into four simple faces (see Fig.~\ref{fig:sameSpirality}). Consider the top-left face. Since the face is simple, a counterclockwise tour of the face would increase spirality by four. As $R$ and the intersection of $R$ with $w$ and $w'$ contribute three left turns, the spirality should increase by one when traversing the face from the first link of $w$ to the first link of $w'$, where the spiralities of $\ell_1$ and $\ell'_1$ are $0$. Assuming that the spirality of $\ell_i$ is $x$ and the spirality of $\ell'_j$ is $y$, then we get that $x + 1 - y = 1$ (including the left turn at the crossing). Thus $x = y$.
\end{proof}

\begin{figure}[t]
\begin{minipage}[t]{.55\textwidth}
	\centering
   \includegraphics[]{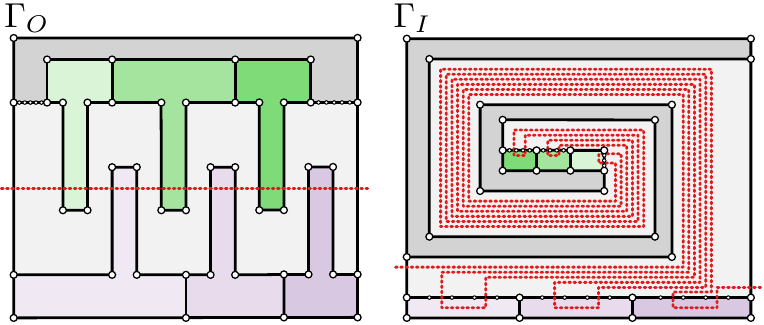}
   \caption{The complexity of a wire can be $\Omega(n^2)$. To satisfy Property (3), the wire in $\Gamma_I$ must spiral through the same polygon a linear number of times. Note that the spirality is still $O(n)$.}
   \label{fig:n2complexity}
\end{minipage}	
\hfill
\begin{minipage}[t]{.42\textwidth}
	\centering
   \includegraphics[]{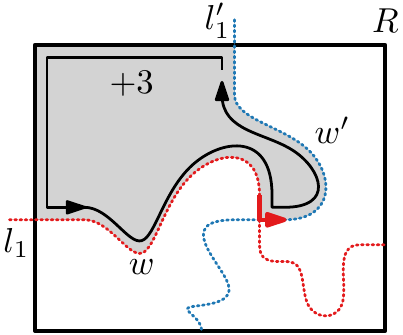}
   \caption{As link $l_1$ and $l'_1$ both have spirality zero and a counterclockwise tour increases spirality by four, the crossing links of $w$ and $w'$ must have equal spirality.}
   \label{fig:sameSpirality}
\end{minipage}
\end{figure}

\subparagraph*{Spirality bound}
In Section~\ref{sec:linear} we show that we can straighten a set of wires using only $O(k)$ linear morphs, if the spirality of each wire is bounded by $k$. It is therefore pertinent to bound the spirality of a proper set of wires. For that we use a particular set of wires. First consider any proper set of wires, which must exist. Then replace every wire $w$ by the lowest and shortest path homotopic to $w$. Because the new wires are homotopic to the initial wires, Property (1) is maintained. Although the wires may partially overlap, they cannot properly cross, and thus overlaps can be removed by slight perturbations. Furthermore, Properties (2) and (3) follow directly from Theorem~\ref{thm:shortestbest} and Lemma~6 from~\cite{Speckmann2018}. Thus, the new set of wires is proper, and in the following we can assume that all wires are lowest and shortest. 

We show that the spirality of any wire in $\Gamma_I$ is $O(n)$. We first establish some properties of the spiralities of links. We consider wires in $W_\rightarrow$ and links with positive spirality, but analogous or symmetric results hold for wires in $W_\downarrow$ and links with negative spirality.

\begin{lemma}\label{lem:intersectSpiral}
Let $\ell_i$ be a horizontal link of a wire $w \in W_\rightarrow$ with even spirality $s(\ell_i) \geq 2$ and let $L$ be a vertical line crossing $\ell_i$. Then there exists a link $\ell_j$ ($j < i$) of $w$ with spirality $s(\ell_i) - 2$ or $s(\ell_i) - 4$ that crosses $L$ below $\ell_i$.
\end{lemma}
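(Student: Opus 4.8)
The plan is to show that $L$ is forced to re-cross $w$ below $\ell_i$, and then to read off the spirality of the re-crossed link by a turning-angle (winding) count on a suitable simple closed curve. Write $p = L \cap \ell_i$. One checks that a horizontal link traversed leftward by $w$ has spirality $\equiv 2 \pmod 4$ and one traversed rightward has spirality $\equiv 0 \pmod 4$. I would first do the leftward case ($s(\ell_i)\equiv 2 \pmod 4$) and then reduce the rightward case ($s(\ell_i)\equiv 0 \pmod 4$, hence $s(\ell_i)\ge 4$) to it — such a link is an ``inner'' link, nested horizontally inside a leftward link of spirality $s(\ell_i)-2$, which is the real point. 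Take the rectangle $R$ from the proof of Lemma~\ref{lem:sameSpirality}: $w$ is a simple arc from the left side of $R$ to its right side, dividing $R$ into a region $R^{+}$ containing the top side and a region $R^{-}$ containing the bottom side, with $R^{+}$ always lying on the left of $w$ as traversed. Directly below a leftward link the plane lies in $R^{+}$, so the downward ray $\rho$ from $p$ enters $R^{+}$; since it can reach neither the bottom side of $R$ (which lies in $R^{-}$) nor the left/right sides of $R$, $\rho$ must hit $w$.

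Next, let $q = L \cap \ell_j$ be the crossing of $L$ with $w$ that is highest below $p$ among links of index $< i$ (here $\ell_j$ is horizontal, since $\rho$ meets it transversally). By maximality no link $\ell_m$ with $j < m < i$ crosses the open segment between $p$ and $q$, so $\gamma$ — this segment together with the sub-path $\ell_j \ell_{j+1}\cdots \ell_i$ — is a simple closed curve (we use that $w$, being lowest and shortest, is simple). Its total turn is $\pm 360^{\circ}$: the sub-path contributes exactly $(s(\ell_i) - s(\ell_j))\cdot 90^{\circ}$, the straight segment contributes $0^{\circ}$, and the corners at $p$ and $q$ each contribute $\pm 90^{\circ}$. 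Since the region bounded by $\gamma$ is contained in $R^{+}$, i.e.\ lies directly below $\ell_i$ near $p$, the corner at $p$ is a left turn; traversing $\gamma$ counterclockwise then forces $(s(\ell_i) - s(\ell_j))\cdot 90^{\circ} \in \{180^{\circ}, 360^{\circ}\}$, that is $s(\ell_j) = s(\ell_i) - 2$ when $\ell_j$ is traversed rightward and $s(\ell_j) = s(\ell_i) - 4$ when $\ell_j$ is traversed leftward. Together with $j < i$ and the fact that $q$ lies below $\ell_i$, this is exactly the claim.

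The main obstacle is the existence of a link of index $< i$ crossing $L$ below $p$, together with pinning down the orientation of $\gamma$ so that the corner signs come out as stated; both genuinely rely on the wires being lowest shortest paths and not arbitrary simple arcs — without shortestness one can draw a simple orthogonal arc with a leftward spirality-$2$ link whose whole arch overshoots the start of the arc, so that $L$ meets no earlier link below it. I would settle this by a local case analysis of $w$ near $\ell_i$: the headings of $\ell_{i-1}$ and $\ell_{i-2}$ are determined by $s(\ell_i)$, and in the simplest configuration $\ell_{i-2}$ is already a rightward link of spirality $s(\ell_i)-2$ lying below $\ell_i$ that $L$ crosses. Shortestness then rules out the remaining configurations — a ``pocket'' of $w$ that dips below $\ell_i$ without an earlier crossing of $L$ is a detour strictly longer than the straight segment on $L$, hence must enclose a vertex, which contradicts $w$ being a lowest shortest path — and the fall-back from the spirality-$(s(\ell_i)-2)$ candidate to a spirality-$(s(\ell_i)-4)$ link occurs precisely when the former fails to cross $L$ below $\ell_i$.
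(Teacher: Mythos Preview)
Your closed-curve-plus-turning-angle computation is the same mechanism the paper uses, but there is a genuine gap: you never establish that some link of index $<i$ crosses $L$ below $\ell_i$. Your $R^{+}/R^{-}$ argument shows only that the downward ray $\rho$ hits \emph{some} link of $w$; that link may well have index $>i$, and then your definition of $\ell_j$ as ``the highest crossing below $p$ among links of index $<i$'' is vacuous. You flag this yourself as ``the main obstacle'' and propose to patch it with a local case analysis and shortestness, but that patch does not work as stated: the headings of $\ell_{i-1},\ell_{i-2}$ are \emph{not} determined by $s(\ell_i)$ (each of $b_{i-1},b_{i-2}$ can independently be $\pm1$), and even when $\ell_{i-2}$ is a rightward link of spirality $s(\ell_i)-2$ there is no reason $L$ should cross it. More importantly, your claimed counterexample (``the arch overshoots the start of the arc'') cannot arise here, because $\ell_1$ begins on the left side of the bounding rectangle $R$ and the whole wire stays inside $R$; the lemma holds for any simple wire in $W_\rightarrow$, and shortestness is not used.

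The paper closes the gap with one extra cycle, and the key move you are missing is to restrict attention to the \emph{prefix} $w[i]=\ell_1\cdots\ell_i$ rather than all of $w$. First show: if $\ell_i$ is the lowest link of $w[i]$ crossing $L$, then $s(\ell_i)\le 0$. Build a simple counterclockwise cycle by going from a point far left on $\ell_1$ downward (far enough to clear $w[i]$), then right to $L$, then up along $L$ to $\ell_i$ (nothing obstructs, by the lowest-crossing hypothesis), then back along $w[i]$; the three constructed corners are left turns, so $3 + x - s(\ell_i) = 4$ with $x\in\{-1,+1\}$, giving $s(\ell_i)\le 0$. Since in fact $s(\ell_i)\ge 2$, some link of $w[i]$ crosses $L$ strictly below $\ell_i$, and you may take $\ell_k$ to be the one directly below. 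Now the cycle through $\ell_k$, the portion of $w$ up to $\ell_i$, and the segment of $L$ is automatically simple (every link involved has index in $[k,i]$, and none of those crosses $L$ between $\ell_k$ and $\ell_i$ by choice of $\ell_k$), and the paper observes that the two corners on $L$ cannot both be right turns, which pins down the orientation without your $R^{+}$ argument. One then gets $|s(\ell_i)-s(\ell_k)|\in\{2,4\}$; iterating downward and using that the lowest crossing has spirality $\le 0$ yields a crossing of spirality exactly $s(\ell_i)-2$ or $s(\ell_i)-4$. Working with $w[i]$ instead of $w$ is what simultaneously supplies existence and removes the index/orientation ambiguities in your argument.
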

\begin{proof}
Let $w[i]$ be the partial wire consisting of links $\ell_1\ldots\ell_i$ of $w$.
We first argue that if $\ell_i$ is the lowest link of $w[i]$ crossing $L$, then $s(\ell_i) \leq 0$. In this case we can create a simple cycle in counterclockwise direction by (1) first going down from $\ell_1$ (starting sufficiently far to the left and going down far enough to avoid crossing the rest of $w[i]$), (2) going to the right until reaching $L$, (3) going up until reaching $\ell_i$, and (4) following $w[i]$ backwards until reaching the starting point of the cycle. The full cycle should be of spirality $4$, and it already contains $3$ left turns by construction. Let $x$ be the contribution of the turn at $\ell_i$ (which can be left or right). Then we get that $3 + x - s(\ell_i) = 4$ or $s(\ell_i) = x - 1$, which directly implies $s(\ell_i) \leq 0$.
Let $\ell_k\in w[i]$ be the link crossing $L$ directly below $\ell_i$ (with $k<i$).
We can again construct a simple cycle consisting of the wire $w[i]$ from $\ell_k$ to $\ell_i$ and the segment of $L$ connecting $\ell_k$ and $\ell_i$ (see Fig~\ref{fig:spiraling}(a)). This cycle contains the bends between $\ell_k$ and $\ell_i$ on $w[i]$, and two bends at the crossings with $L$. Following the cycle in counterclockwise direction implies that there cannot be right turns at both the crossing between $L$ and $\ell_i$ and between $L$ and $\ell_k$, for otherwise $w[k]$ would have to cross $L$ between $\ell_k$ and $\ell_i$ (see Fig.~\ref{fig:spiraling}(b)), contradicting our assumption. Therefore, the bends of $w[i]$ between $\ell_k$ and $\ell_i$ contribute between $2$ and $4$ to the total spirality of $4$ of the cycle (in either direction). As a result, the spirality between $\ell_k$ and $\ell_i$ can differ by at most $4$. We can repeat this argument on $w[k]$. Since the lowest link crossing $L$ has spirality at most $0$, there must exist a link $\ell_j$ crossing $L$ below $\ell_i$ with spirality $s(\ell_i) - 2$ or $s(\ell_i) - 4$.
\end{proof}

\begin{figure}[t]
\begin{minipage}[t]{.51\textwidth}
  \centering
  \includegraphics[]{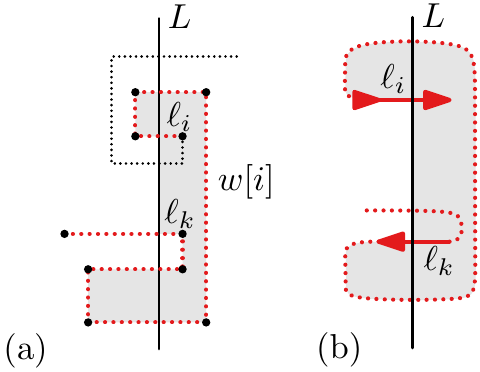}
  \caption{\textbf{(a)} The sub-wire $w[i]$ (red). The path along $w[i]$ from $\ell_k$ to $\ell_i$ and the segment of $L$ connecting $\ell_i$ and $\ell_k$ forms a simple cycle. \textbf{(b)} The cycle cannot have two right turns adjacent to $L$ as $w[i]$ does not intersect $L$ between $\ell_k$ and $\ell_i$.}
  \label{fig:spiraling}
\end{minipage}
\hfill
\begin{minipage}[t]{.46\textwidth}
  \centering
  \includegraphics[]{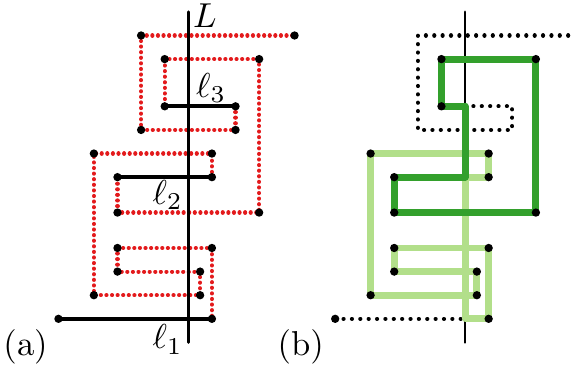}
  \caption{\textbf{(a)} A wire of spirality $s=4$ and $\Omega(s)$ selected edges that are ordered both along $w[i]$ and $L$. \textbf{(b)} A decomposition of $w$ in two polygonal lines and $\Omega(s)$ non-overlapping cycles.}
  \label{fig:spirality}
\end{minipage}		
\end{figure}

If we consider a link $\ell_i$ with negative spirality, then Lemma~\ref{lem:intersectSpiral} holds for a link $\ell_j$ with spirality $s(\ell_i)+2$ or $s(\ell_i)+4$ that intersects $L$ above $\ell_i$.
By repeatedly applying Lemma~\ref{lem:intersectSpiral} we obtain the following result.

\begin{lemma}\label{lem:linearIntersections}
For each link of a wire $w \in W_\rightarrow$ with positive (negative) spirality $s$ there exists a vertical line $L$ and a subsequence of $\Omega(|s|)$ links of $w$ crossing $L$ from bottom to top (top to bottom), such that these links are also ordered increasingly (decreasingly) on spirality.
\end{lemma}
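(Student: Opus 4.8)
The plan is to iterate Lemma~\ref{lem:intersectSpiral}, starting from a suitable horizontal link, so as to build a chain of links that all cross one common vertical line $L$ and whose spiralities drop by $2$ or $4$ at every step.

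I would first reduce to a horizontal starting link with even spirality. Fix a link of $w$ of spirality $s$ and assume $s>0$; the case $s<0$ is symmetric, reflecting the vertical order and using the variant of Lemma~\ref{lem:intersectSpiral} noted above, which produces a link of spirality $s(\ell_i)+2$ or $s(\ell_i)+4$ crossing $L$ above $\ell_i$. If $s\le 2$ there is nothing to prove, since a single link already forms a valid subsequence of size $\Omega(|s|)=\Omega(1)$. So assume $s\ge 3$. Since the first link of an lr-wire is horizontal and consecutive links of a wire alternate between horizontal and vertical, every horizontal link has even spirality; hence if the given link is horizontal I keep it, and if it is vertical I replace it by one of its two horizontal neighbours, whose spirality lies in $\{s-1,s+1\}$ and is therefore an even number at least $2$. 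Call the resulting horizontal link $\ell$, let $s'\ge s-1\ge 2$ be its (even) spirality, and choose a vertical line $L$ through the interior of $\ell$, generic enough that every link of $w$ meeting $L$ is horizontal.

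Next I would build the chain $\ell=\ell^{(0)},\ell^{(1)},\ldots,\ell^{(m)}$. As long as $\ell^{(t)}$ has spirality $s^{(t)}\ge 2$, it is a horizontal link crossing $L$ with even spirality, so Lemma~\ref{lem:intersectSpiral} applies and yields a link $\ell^{(t+1)}$ of $w$ occurring earlier along $w$ than $\ell^{(t)}$, with $s^{(t+1)}\in\{s^{(t)}-2,\,s^{(t)}-4\}$, crossing $L$ strictly below $\ell^{(t)}$. Because $\ell^{(t+1)}$ meets $L$ it is again horizontal, and since its spirality is an even number minus $2$ or $4$ it is again even, so the hypotheses are preserved and the iteration continues until the spirality first drops below $2$. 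From $s^{(0)}=s'\ge 2$, $s^{(m)}\le 0$, and the fact that each step decreases the spirality by at most $4$, we get $s'\le 4m$, hence the chain has $m+1=\Omega(s')=\Omega(s)$ links.

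Finally I would check that the three relevant orderings coincide. Along $w$, each $\ell^{(t+1)}$ comes before $\ell^{(t)}$, so reading $w$ from left to right the chain appears as $\ell^{(m)},\ell^{(m-1)},\ldots,\ell^{(0)}$; in particular it is a subsequence of the links of $w$, all crossing $L$. Along $L$, each $\ell^{(t+1)}$ lies below $\ell^{(t)}$, so from bottom to top the links again appear as $\ell^{(m)},\ldots,\ell^{(0)}$. Their spiralities $s^{(m)}<s^{(m-1)}<\cdots<s^{(0)}$ increase in exactly that order. Thus this subsequence of $\Omega(|s|)$ links of $w$ crosses $L$ from bottom to top and is ordered increasingly on spirality, as required, and the negative case gives symmetrically a subsequence crossing $L$ from top to bottom with decreasing spirality. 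The substantive work is already contained in Lemma~\ref{lem:intersectSpiral}; given it, the only points that need care are keeping the chain inside the regime where that lemma applies (horizontal links, even spirality $\ge 2$) and verifying that the order along $w$, the order along $L$, and the order by spirality all agree.
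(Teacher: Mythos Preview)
Your proposal is correct and follows exactly the approach the paper intends: the paper's ``proof'' is the single sentence ``By repeatedly applying Lemma~\ref{lem:intersectSpiral} we obtain the following result,'' and you have carefully spelled out that iteration. Your attention to the preconditions of Lemma~\ref{lem:intersectSpiral} (reducing to a horizontal link of even spirality, keeping $L$ fixed and generic, checking that the produced link is again horizontal with even spirality) and your verification that the three orderings---along $w$, along $L$, and by spirality---coincide are details the paper leaves implicit.
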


\begin{lemma}\label{lem:spirallinear}
The maximum absolute spirality of any link of a wire $w \in W_\rightarrow$ in $\Gamma_I$ is $O(n)$.
\end{lemma}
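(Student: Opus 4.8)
The plan is to use Lemma~\ref{lem:linearIntersections} as a black box: if some link of $w$ had spirality $s$ with $|s| = \omega(n)$, then we would obtain a vertical line $L$ crossed by $\Omega(|s|) = \omega(n)$ links of $w$ in monotone (say bottom-to-top) order, with the spiralities of these links increasing along the way. The goal is to show this forces $w$ to be ``wasteful'' — that is, not a lowest/shortest path in its homotopy class — unless $|s| = O(n)$. So the first step is to fix such a line $L$ and the subsequence of links $\ell_{j_1}, \ell_{j_2}, \ldots, \ell_{j_m}$ crossing $L$ with $m = \Omega(|s|)$, ordered bottom-to-top along $L$ and with strictly increasing spirality.

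Next I would extract a geometric obstruction. Between consecutive crossings $\ell_{j_t}$ and $\ell_{j_{t+1}}$, the portion of $w$ together with the segment of $L$ joining them bounds a region; since the spiralities strictly increase and $w$ is a lowest/shortest path, each such region must ``contain something'' — concretely, it must contain a vertex of $G$ (an obstacle), or an edge of $G$ that the wire is forced to route around by Property~(3), since otherwise the sub-wire could be homotoped to reduce its length/lowness, contradicting that $w$ is the lowest shortest representative of its class. This is exactly the content suggested by Fig.~\ref{fig:spirality}: the wire decomposes into two monotone polygonal pieces plus $\Omega(s)$ pairwise non-overlapping cycles, each of which must enclose at least one obstacle. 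Because these cycles are ``stacked'' along $L$ and nested/non-overlapping in a way that each encloses a distinct obstacle (the nesting forces the enclosed obstacle sets to be distinct — an innermost new obstacle appears at each level), we get $\Omega(|s|)$ distinct obstacles. Since the total number of obstacles (vertices of the unified graph $G$) is $n$, we conclude $|s| = O(n)$, and taking the maximum over all links of $w$ gives the bound.

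The main obstacle I anticipate is making precise the claim that each of the $\Omega(|s|)$ cycles encloses a \emph{distinct} obstacle. The cycles share the line $L$ and are ``telescoped'' along it, so one must argue carefully that the obstacle certifying non-shortcutability at level $t$ is not the same as the one at level $t'$ — essentially that the regions, while not disjoint as subsets of the plane, can be chosen so that their ``witnessing'' obstacles are distinct, or alternatively that the innermost region at consecutive levels strictly loses the previous witness. Handling the subtlety that an obstacle could in principle witness several levels, and that wires may also be forced to route around \emph{edges} rather than vertices (so one should count edge-crossings consistently via Property~(3) and bound those by $O(n)$ as well), is where the real care is needed; everything else is bookkeeping on turns and spirality already set up by Lemmas~\ref{lem:intersectSpiral} and~\ref{lem:linearIntersections}.
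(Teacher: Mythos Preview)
Your setup is exactly right: invoke Lemma~\ref{lem:linearIntersections}, fix the vertical line $L$, take the monotone subsequence of $\Omega(|s|)$ crossings, and form the $\Omega(|s|)$ cycles between consecutive crossings. Where you diverge from the paper is at the counting step, and the worry you flag is a real gap.

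You try to count \emph{enclosed obstacles} and then argue distinctness across cycles. This is the hard way. The cycles can nest (their interiors need not be disjoint; only their boundaries are segment-disjoint), so a single vertex can sit inside many of them, and there is no clean ``innermost new witness'' argument available without substantially more work. Your own diagnosis of the obstacle is accurate; what is missing is that you do not need to overcome it.

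The paper instead counts \emph{edge crossings on the cycle boundaries}. Each cycle boundary is a piece of $w$ together with a piece of $L$, and by construction no two cycles share a segment of either. Now: (i) $L$ is a single vertical line and the unified graph has straight-line edges, so $L$ crosses $O(n)$ edges in total; (ii) $w$ is a horizontal line in $\Gamma_O$, hence crosses $O(n)$ edges there, and by Property~(3) of a proper set of wires it crosses exactly the same edges in $\Gamma_I$. Thus the union of all cycle boundaries has $O(n)$ edge crossings in total. Finally, because $G$ is connected and $w$ is a shortest representative of its homotopy class, each cycle must cross at least one edge: if some cycle crossed no edge, its interior would contain no vertex (connectivity), so the $w$-portion could be homotoped to the $L$-portion and shortened, a contradiction. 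Hence the number of cycles is $O(n)$, giving $|s| = O(n)$.

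So the fix is simple: replace ``each region encloses a distinct obstacle'' by ``each cycle boundary crosses at least one edge, and the total number of edge crossings along $L \cup w$ is $O(n)$.'' Note in particular that connectivity of $G$ is used here in an essential way (and indeed the paper's conclusion singles this lemma out as the place where connectivity is needed).
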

\begin{proof}
Without loss of generality assume that the maximum absolute spirality of a link in wire $w$ occurs for a link with positive spirality, and let that spirality be $s$. By Lemma~\ref{lem:linearIntersections} there exists a subsequence $\ell_1, \ldots, \ell_k$ of links of $w$ (not necessarily consecutive along $w$) ordered increasingly by spirality with $k = \Omega(s)$, such that all of these links cross a vertical line $L$ in order from bottom to top. We can thus construct $k-1$ simple cycles by connecting $\ell_i$ to $\ell_{i+1}$ along $L$ and along $w$, such that two different cycles do not share any segments (see Fig.~\ref{fig:spirality}). Since $w$ is constructed to be shortest with respect to its homotopy class, and $G$ is connected, every cycle constructed in this way must cross an edge of $\Gamma_I$, for otherwise $w$ can be shortened by following $L$ locally. By construction $L$ can only cross $O(n)$ edges of $\Gamma_I$, and each edge only once. Similarly, $w$ can cross only $O(n)$ edges of $\Gamma_O$, and each edge only once, as $w$ is horizontal in $\Gamma_O$. As the wires are proper $w$ also crosses only $O(n)$ edges in $\Gamma_I$.
Therefore, the cycles can cross only $O(n)$ edges in total, and thus $s = O(k) = O(n)$.
\end{proof}
Analogously, Lemma~\ref{lem:spirallinear} also holds for wires in $W_\downarrow$, and thus the maximum spirality of all wires is bounded by $O(n)$. Note that we only use shortest paths to bound the spirality of the wires. In the remainder of the paper we merely require that the spirality is bounded by $O(n)$, and any proper set of wires satisfying that bound suffices for our purpose. 

\section{Linear number of linear morphs}\label{sec:linear}

We now describe our algorithm to morph $\Gamma_I$ to $\Gamma_O$ using $O(n)$ linear morphs.
The complexity of the drawing may grow to $O(n^2)$ intermediately though.
In Section~\ref{sec:linearComplexity} we refine our approach to keep the complexity of intermediate drawings at $O(n)$.

It is important to note that for our analysis of the initial spirality we required $\Gamma_I$ and $\Gamma_O$ to be straight-line drawings of the unified graph. For the morph itself we let go of this stringent requirement. During the morph we introduce bends in the edges to rotate them. We will show that the spirality of the wires only decreases during the morph.

The idea of the algorithm is to reduce the maximum spirality of the wires using only $O(1)$ linear morphs. Then, by Lemma~\ref{lem:spirallinear} we need only $O(n)$ linear morphs to straighten the wires, after which we can morph the drawing to $\Gamma_O$ using $O(1)$ linear morphs, as described in Section~\ref{sec:wires}.
We will show that we can reduce the spirality of wires in $W_\rightarrow$ ($W_\downarrow$) without increasing the spirality of wires in $W_\downarrow$ ($W_\rightarrow$) and vice versa. 
In the description below, we limit ourselves to straightening the wires in $W_\rightarrow$.

Now let $\ell^*$ be a link with maximum absolute spirality. To reduce the absolute spirality of $\ell^*$, we use a zigzag-eliminating slide as described in Section~\ref{sec:prelim}, where $\ell^*$ is the middle link of the zigzag. As $\ell^*$ is a link with maximum absolute spirality, the links adjacent to $\ell^*$ are on opposite sides of the line through $\ell^*$. It is easy to see that this slide thus eliminates $\ell^*$ and does not introduce any bends in the wires in $W_\rightarrow$ (see Fig.~\ref{fig:zigzagCombi}(a)). However, the link $\ell^*$ may intersect an edge of $\Gamma_I$ or a link of a wire from $W_\downarrow$. In that case we introduce a bend in the involved edge (link) to execute the slide properly (see Fig.~\ref{fig:zigzagCombi}(b)). If $\ell^*$ intersects more than one edge of $\Gamma_I$, then we must be careful not to introduce an overlap in $\Gamma_I$. To avoid this, we first execute bend-creating slides, essentially subdividing $\ell^*$, to ensure that every link with maximum absolute spirality intersects with at most one edge of $\Gamma_I$ (see Fig.~\ref{fig:zigzagCombi}(c)).

To reduce the number of linear morphs, we combine all slides of the same type into a single linear morph. For all links with the same spirality, all bend-creating slides are combined into one linear morph, and all zigzag-eliminating slides are combined into another linear morph.
Links with positive spirality and links with negative spirality are combined into separate linear morphs.
Thus, using at most $4$ linear morphs, we reduce the 
maximum spirality of all wires in $W_\rightarrow$ by one.

\begin{figure}[t]
  \centering
  \includegraphics[]{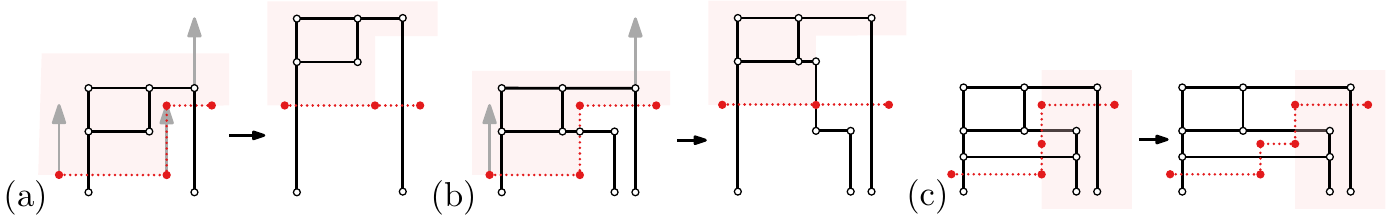}
  \caption{Different slide-types used on the wires. \textbf{(a)} The slide from~\cite{Biedl2013} executed on a wire. \textbf{(b)}~A~single crossing edge (link) causes the introduction of new bends in the edge (link). \textbf{(c)}~A~bend-introducing slide offsets edge intersections without increasing the spirality of the wire.}
  \label{fig:zigzagCombi}
\end{figure}

\subparagraph*{Analysis}
We first show that performing slides on links in $W_\rightarrow$ does not have adverse effects on wires in $W_\downarrow$. This is easy to see for bend-creating slides, as we can assume that wires in $W_\rightarrow$ and wires in $W_\downarrow$ never have overlapping links.
\begin{lemma}\label{lem:independence}
Performing a zigzag-eliminating slide on a link with maximum absolute spirality in $W_\rightarrow$ does not increase the spirality of a wire in $W_\downarrow$.
\end{lemma}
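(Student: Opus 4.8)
The plan is to follow the zigzag-eliminating slide link by link along an arbitrary wire $w'\in W_\downarrow$ and to show that its only non-trivial effect on $w'$ is to replace a single straight link by a short zigzag whose middle link has absolute spirality one less than a value $w'$ already attains; every other link of $w'$ is merely translated or rescaled, so its turns are untouched.

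First I would fix the local picture. By symmetry assume $\ell^*$ is the vertical middle link of a horizontal zigzag as in Fig.~\ref{fig:slides}(a), with lower bend $\beta$, upper bend $\gamma$, and $s(\ell^*)=M>0$, where $M$ is the maximum absolute spirality over all wires of $W_\rightarrow$ (the sign-reversed case and the case of a horizontal $\ell^*$ are symmetric). Because $\ell^*$ realizes the maximum, its two neighbours turn back towards $0$, so in $w$ the link entering $\beta$ is horizontal at $\beta$'s height and the link leaving $\gamma$ is horizontal at $\gamma$'s height. Consequently the set $\mathcal{V}$ that the slide translates upward by $\delta=|\beta\gamma|$ is exactly the part of the plane lying weakly above the bi-infinite orthogonal polyline consisting of $\ell^*$ together with the two horizontal rays (left of $\beta$ at $\beta$'s height, right of $\gamma$ at $\gamma$'s height); everything strictly below stays fixed, and the side of $\ell^*$ that lies in $\mathcal{V}$ is the side from which $w$ enters $\beta$.

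Next I would classify the effect on $w'$. A link of $w'$ with both endpoints in $\mathcal{V}$, or both outside, is translated rigidly; a vertical link of $w'$ that meets one of the two horizontal rays merely changes length and stays vertical; and, by general position, a horizontal link of $w'$ is never collinear with such a ray. Hence the only link of $w'$ whose shape changes non-trivially is a horizontal link $\ell'_j$ crossing $\ell^*$, and by Property~(2) the wires $w$ and $w'$ cross exactly once, so there is at most one such $\ell'_j$; if there is none, $w'$ is unchanged as a turn sequence and we are done. Otherwise the slide raises the portion of $\ell'_j$ on the $\mathcal{V}$-side of $\ell^*$ by $\delta$ and leaves the other portion in place, so $\ell'_j$ becomes a short zigzag near $\beta$ (the creation of new bends in a crossing link shown in Fig.~\ref{fig:zigzagCombi}(b)); since the raised portion is on the side from which $w$ enters $\beta$, traversing $\ell'_j$ along $w'$ the first of its two new bends turns towards $0$, so its new middle link gets spirality $s(\ell'_j)-1$, while the two surviving pieces of $\ell'_j$ keep spirality $s(\ell'_j)$ and no other link of $w'$ changes spirality.

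Finally I would conclude with the bookkeeping. By Lemma~\ref{lem:sameSpirality}, $s(\ell'_j)=s(\ell^*)=M$, so $w'$ already contains a link of spirality $M$; hence introducing the value $M-1$ cannot increase the spirality of $w'$ (in the sign-reversed case the new value is $s(\ell'_j)+1$, again one step closer to $0$ than $s(\ell'_j)$, which $w'$ already realizes). I expect the main obstacle to be the classification step---checking that no turn of $w'$ is disturbed except through the unique horizontal link crossing $\ell^*$, and that the new middle link's spirality moves towards $0$ rather than away. The key ingredients are the exactly-once crossing of $W_\rightarrow$- and $W_\downarrow$-wires, the fact that shearing a vertical segment by a vertical translation of part of it keeps it vertical, and Lemma~\ref{lem:sameSpirality}, which pins the spirality at the crossing to the global maximum $M$.
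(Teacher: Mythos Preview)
Your proof is correct and follows essentially the same line as the paper's: identify the (at most one) link $\ell'_j$ of $w'$ crossing $\ell^*$, use Lemma~\ref{lem:sameSpirality} to get $s(\ell'_j)=s(\ell^*)$, and argue that the new middle link inserted into $w'$ has absolute spirality $|s(\ell^*)|-1$. The paper obtains this last fact more cheaply by applying Lemma~\ref{lem:sameSpirality} a \emph{second} time---the new link of $w'$ crosses the merged link of $w$, whose absolute spirality is $|s(\ell^*)|-1$---rather than by your direct geometric analysis of which side of $\ell^*$ lies in $\mathcal{V}$ and which way the first new bend turns.
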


\begin{proof}
Let $\ell^*$ be the middle link of the zigzag-eliminating slide. The zigzag-eliminating slide can only change a wire $w'$ in $W_\downarrow$ if $\ell^*$ crosses a link $\ell'$ in $w'$. By Lemma~\ref{lem:sameSpirality} $s(\ell')=s(\ell^*)$.
The slide does not change the spirality of any link in $w'$, but a new link has been introduced in the middle of $\ell'$. This new link in $w'$ crosses the link obtained by eliminating $\ell^*$, which has absolute spirality $|s(\ell^*)| - 1$. By Lemma~\ref{lem:sameSpirality} the new link in $w'$ must also have absolute spirality $|s(\ell^*)|-1$, and thus the spirality of $w'$ has not been increased.
\end{proof}

We also prove that we can combine zigzag-eliminating slides (and bend-creating slides) into a single linear morph that maintains both planarity and orthogonality of the drawing.

\begin{lemma}\label{lem:merge}
Multiple bend-creating or zigzag-eliminating slides on links of the same spirality in $W_\rightarrow$ can be combined into a single linear morph that maintains planarity and orthogonality.
\end{lemma}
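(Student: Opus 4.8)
The plan is to view each slide as the rigid upward translation of a \emph{staircase} region of the plane, and to show that the regions associated with links of the same (maximum) spirality fit together well enough that their translations can be performed simultaneously. Concretely, a zigzag-eliminating slide on a middle link $\ell^*$ translates the point set $\mathcal{V}(\ell^*)$ upward by the height $d(\ell^*)$ of $\ell^*$, and $\mathcal{V}(\ell^*)$ is bounded by a staircase that is monotone in the relevant sense. I first reduce the lemma to three claims: (i) the combined displacement is well defined, i.e.\ assigning to every vertex (and bend) the sum of $d(\ell^*)$ over all slid links $\ell^*$ whose region contains that vertex yields a single linear morph; (ii) this morph maintains planarity; (iii) it maintains orthogonality. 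The bend-creating case is the time-reverse of the zigzag-eliminating case, so it suffices to treat the latter, with the understanding that we first run the subdivision step of Section~\ref{sec:linear} so that every maximum-spirality link meets at most one edge of $\Gamma_I$.

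The key structural step, and the one I expect to be the main obstacle, is to show that the regions $\mathcal{V}(\ell^*)$ over all links $\ell^*$ of spirality exactly $s$ form a \emph{laminar family}: any two are nested or interior-disjoint. For links in two different wires of $W_\rightarrow$ this should follow because wires of $W_\rightarrow$ never cross (Property~(2)) and may be assumed not to overlap, so the staircases bounding their regions cannot cross. For two links of the \emph{same} wire $w$ with the same maximal spirality $s$, I would argue as in Lemma~\ref{lem:intersectSpiral}: if the two bounding staircases crossed, one could extract a simple cycle running along $w$ whose turning number forces a link of $w$ of spirality exceeding $s$, contradicting the maximality of $s$. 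Granting laminarity, each point lies in a chain of nested regions, so its total upward shift is the finite sum of the corresponding heights, which defines a linear morph and establishes~(i).

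For planarity~(ii) I would use that, as far as the horizontal and vertical orders of vertices are concerned, the combined morph is equivalent to performing the individual zigzag-eliminating slides one at a time from the innermost region of the laminar family outward. Each such slide maintains planarity on its own; since all displacements are parallel (upward) and the shift amounts are monotone along each laminar chain, no two vertices ever exchange their horizontal order, and two vertices exchange their vertical order only if some single slide would already do so, which none does. The interaction with $W_\downarrow$ is exactly that of a single slide and is controlled by Lemma~\ref{lem:independence} together with Lemma~\ref{lem:sameSpirality}, so no spurious crossings with $W_\downarrow$-links, nor (after the subdivision step) with edges of $\Gamma_I$, are created.

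Finally, for orthogonality~(iii): a horizontal or vertical piece of an edge, of a wire link, or of the new segment replacing $\ell^*$ may be crossed by several region boundaries, but by laminarity these crossing points occur in a consistent order along the piece, and on each sub-piece between consecutive crossing points the displacement is a single constant vector, so every maximal axis-aligned piece is translated rigidly and stays axis-aligned. Where a boundary crosses a horizontal segment we insert a bend exactly as for a single slide, and laminarity guarantees that these insertions along a common segment are mutually consistent. Combining (i)--(iii) gives the lemma, and the statement for $W_\downarrow$ follows by symmetry.
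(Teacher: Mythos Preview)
Your overall framework---reduce to zigzag-eliminating slides, define the combined morph as the sum of the individual vertical displacements, and then verify planarity and orthogonality---matches the paper. The gap is in the structural claim you build everything on: the regions $\mathcal{V}(\ell^*)$ do \emph{not} form a laminar family in general, even when restricted to links of the same maximum spirality.

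The boundary of $\mathcal{V}(\ell^*)$ is not the wire containing $\ell^*$; it is a single step that runs horizontally to $-\infty$ at height $\beta.y$, jumps along $\ell^*$, and continues horizontally to $+\infty$ at height $\gamma.y$. Property~(2) only says the \emph{wires} do not cross; it says nothing about these infinite horizontal extensions. A concrete counterexample: take two non-crossing wires $w_1,w_2\in W_\rightarrow$ where $w_1$ has a tall spike (its spirality-$1$ link runs from $(0,0)$ to $(0,100)$) and $w_2$ stays above the spike but dips down to the right of it, with its spirality-$1$ link running from $(20,50)$ to $(20,55)$. Then the staircase of $w_1$ sits at height $0$ for $x<0$ and $100$ for $x\ge 0$, while that of $w_2$ sits at $50$ for $x<20$ and $55$ for $x\ge 20$; these cross at $x=0$, so $\mathcal{V}(\ell_1^*)$ and $\mathcal{V}(\ell_2^*)$ properly overlap. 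Your argument ``wires don't cross, hence staircases don't cross'' therefore fails, and with it the derivations of (ii) and (iii) that rely on processing a laminar chain inside-out.

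The paper avoids this entirely with a much simpler observation. All links of a fixed spirality have the same orientation, so all displacements are parallel, say vertical. Each $\mathcal{V}(\ell^*)$ is \emph{upward-closed}: for fixed $x$, membership is monotone in $y$. Hence the total vertical displacement is nondecreasing in $y$ at every $x$, which immediately gives that points with higher $y$-coordinate are moved up at least as far as points below them. That alone yields planarity (vertical order preserved; horizontal order trivially preserved since no horizontal motion occurs). Orthogonality is equally direct: vertical segments stay vertical, and a horizontal segment can meet the boundary of some $\mathcal{V}(\ell^*)$ only along the vertical middle link $\ell^*$ itself, which is exactly where a bend is inserted. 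No nesting structure is needed.
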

\begin{proof}
As bend-creating slides are simply the inverse of zigzag-eliminating slides, we can restrict ourselves to the latter. As all zigzag-eliminating slides operate on links of the same spirality, they are all either horizontal or vertical. Without loss of generality, assume that all zigzags are horizontal. Then all vertices in the drawing are moved only vertically, which means that the horizontal order of vertices is maintained and that vertical edges remain vertical. Furthermore, since we introduce bends at edges that intersect the middle segment of zigzags, horizontal edges are either subdivided or remain horizontal during the linear morph. Finally, we can only violate planarity if a vertex overtakes an edge in the vertical direction. However, by construction, points with higher $y$-coordinates are moved up at least as far as points with lower $y$-coordinates, and thus the vertical order is also maintained.
\end{proof}

\begin{theorem}\label{thm:linearmorph}
Let $\Gamma_I$ and $\Gamma_O$ be two orthogonal planar drawings of $G$, where $G$ is the unification of $\Gamma_I$ and $\Gamma_O$, and $\Gamma_I$ and $\Gamma_O$ have the same combinatorial embedding and the same outer boundary. Then we can morph $\Gamma_I$ to $\Gamma_O$ using $O(n)$ linear morphs, where $n$ is the number of vertices of $G$.
\end{theorem}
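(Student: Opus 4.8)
The plan is to assemble the pieces already in place: a proper set of wires of spirality $O(n)$ in $\Gamma_I$, the spirality-reducing slides of this section, and the observation from Section~\ref{sec:wires} that straight wires can be removed with $O(1)$ linear morphs. First I fix a proper set of wires $W_\rightarrow \cup W_\downarrow$ in $\Gamma_I$ in which each wire is a lowest, shortest path of its homotopy class; by Lemma~\ref{lem:spirallinear} (and its analogue for $W_\downarrow$) every link has absolute spirality at most $cn$ for a constant $c$.

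The core of the proof is a single \emph{round} that decreases the maximum absolute spirality over all wires by one using $O(1)$ linear morphs. Let $k$ be the current maximum. If some wire of $W_\rightarrow$ attains $k$, then for every link $\ell^*$ of a wire in $W_\rightarrow$ with $|s(\ell^*)| = k$ the two neighbouring links lie on opposite sides of the line through $\ell^*$ (by maximality), so $\ell^*$ is the middle link of a zigzag. I first perform bend-creating slides to subdivide each such $\ell^*$ so that it crosses at most one edge of $\Gamma_I$ (Fig.~\ref{fig:zigzagCombi}(c)), and then perform the zigzag-eliminating slides that straighten all these zigzags, introducing a bend in each crossed edge of $\Gamma_I$ or crossed link of $W_\downarrow$ as in Fig.~\ref{fig:zigzagCombi}(b). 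By Lemma~\ref{lem:merge} all positively-signed bend-creating slides merge into one planarity- and orthogonality-preserving linear morph, and similarly for the zigzag-eliminating slides and for the negatively-signed links, so at most $4$ linear morphs suffice. No link of spirality $\geq k$ survives: the only new links are those split off inside crossed $W_\downarrow$-links, and by Lemma~\ref{lem:sameSpirality} each such link has spirality $|s(\ell^*)|-1 = k-1$. Thus the maximum spirality of $W_\rightarrow$ becomes $k-1$, and by Lemma~\ref{lem:independence} no wire of $W_\downarrow$ increases in spirality. If in addition some wire of $W_\downarrow$ has spirality $k$, I apply the symmetric procedure to $W_\downarrow$ (another $\leq 4$ morphs), which by the symmetric version of Lemma~\ref{lem:independence} does not raise the spirality of any wire of $W_\rightarrow$. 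After at most $8$ linear morphs the global maximum spirality has dropped by one.

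Iterating this round takes $O(n)$ repetitions, hence $O(n)$ linear morphs, since the maximum spirality starts at $O(n)$ and is monotonically non-increasing. When it reaches $0$, every link of every wire has spirality $0$, so every wire of $W_\rightarrow$ ($W_\downarrow$) is a single horizontal (vertical) line, and Properties~(1)--(3) still hold because every slide is a planar morph. As explained in Section~\ref{sec:wires}, the straight wires now form a grid with at most one vertex per cell and the edges crossing them exactly as in $\Gamma_O$; simultaneously morphing, in every cell, the vertex and all bends to the cell centre yields a bend-free drawing similar to $\Gamma_O$ (one linear morph), and one further linear morph reaches $\Gamma_O$. In total this is $O(n)$ linear morphs.

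I expect the main obstacle to be precisely the two claims that make each round cheap: that all slides of a given type and sign on links of equal spirality can be executed as one planarity-preserving linear morph (Lemma~\ref{lem:merge}) --- this is what makes the cost $O(1)$ rather than proportional to the $\Theta(n^2)$ complexity of a wire --- and that reducing the spirality of $W_\rightarrow$ cannot increase that of $W_\downarrow$ (Lemma~\ref{lem:independence}), without which the two wire families could trade spirality indefinitely and the process need not terminate in $O(n)$ steps. The remaining delicate point, that a maximum-spirality link crossing several edges of $\Gamma_I$ does not create overlapping edges during its slide, is handled by the preliminary bend-creating slides of the round.
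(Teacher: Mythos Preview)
Your proposal is correct and follows essentially the same approach as the paper's proof: fix a proper set of wires of spirality $O(n)$, use at most $8$ linear morphs per round (four for $W_\rightarrow$, four for $W_\downarrow$, combining bend-creating and zigzag-eliminating slides via Lemma~\ref{lem:merge}) to drop the maximum spirality by one while invoking Lemma~\ref{lem:independence} to prevent interaction between the two families, and finish with the $O(1)$ morphs of Section~\ref{sec:wires}. Your write-up is simply more explicit about the mechanics of a round and about why spirality~$0$ forces each wire to be a single straight segment, but the argument is the same.
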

\begin{proof}
Let $W_\rightarrow$ and $W_\downarrow$ be a proper set of wires for $\Gamma_I$ with maximum spirality $O(n)$. As shown in Section~\ref{sec:wires} such a set exists.
Using Lemma~\ref{lem:merge}, we repeatedly reduce the maximum spirality of the wires in $W_\rightarrow$ and $W_\downarrow$ by one using at most two times $4$ linear morphs as described above.
By Lemmata~\ref{lem:spirallinear} and~\ref{lem:independence} all wires can be straightened with at most $O(n)$ linear morphs.
Afterwards, the resulting drawing $\Gamma$ is similar to $\Gamma_O$ except for additional bends. Using $O(1)$ linear morphs we can morph $\Gamma$ to $\Gamma_O$ (see Section~\ref{sec:wires}).
\end{proof}

\section{Linear complexity}\label{sec:linearComplexity}

We refine the approach from Section~\ref{sec:linear} to ensure that the drawing maintains $O(n)$ complexity during the morph.
To achieve this we make two small changes to the algorithm.
First, we ensure that for each edge intersected by links of maximum absolute spirality we only perform a slide for one of the intersecting links and reroute the remaining wires.
This ensures that only $O(1)$ bends per edge are added per iteration of the algorithm. Second, we perform additional intermittent linear morphs to keep the number of bends per edge low.
Both alterations add only $O(n)$ additional linear morphs in total.
The changes ensure that each edge has $O(1)$ bends at every step of the morph; the $O(n)$ complexity bound trivially follows.

\subparagraph*{Rerouting wires}
During each iteration of the algorithm in Section~\ref{sec:linear} we add $O(1)$ linear morphs to ensure that only $O(1)$ new bends are introduced in each edge.
Our approach maintains the invariant that all wires crossing an edge cross the same segment of the edge. Trivially this is the case in $\Gamma_I$. We first establish the following property.

\begin{figure}[t]
\begin{minipage}[t]{.26\textwidth}
  \centering
  \includegraphics[]{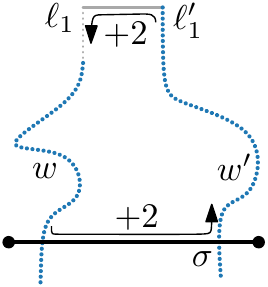}
  \caption{Two wires crossing the same segment $\sigma$ have the same spirality as $s(\ell_1)=s(\ell'_1)=0$ and a counterclockwise tour increases spirality by four.}
  \label{fig:segment}
\end{minipage}
\hfill
\begin{minipage}[t]{.71\textwidth}
  \centering
  \includegraphics[]{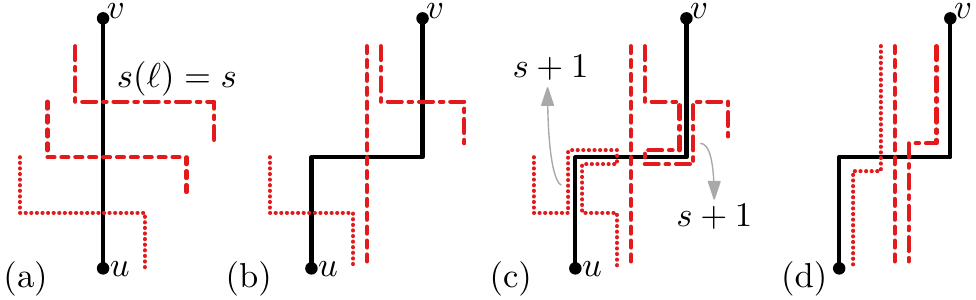}
  \caption{\textbf{(a)} A segment $\sigma=(u,v)$ with three crossing wires of maximum spirality $s>0$. \textbf{(b)} A slide on an arbitrary crossing link adds two bends to $\sigma$. \textbf{(c)} Rerouting the remaining wires ensures all crossing links have spirality $s-1$. New links are created with spirality $s+1$ (and of sprirality $s$) but they do not cross any edge. \textbf{(d)} Result after reducing all non-crossing links of spirality $s$ and $s+1$.}
  \label{fig:reroute}
\end{minipage}
\end{figure}

\begin{lemma}\label{lem:segmentSpirality}
All links intersecting the same segment of an edge have the same spirality.
\end{lemma}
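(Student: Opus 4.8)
The plan is to adapt the argument already used in Lemma~\ref{lem:sameSpirality} and illustrated in Fig.~\ref{fig:segment}. Fix an edge $e = (u,v)$ and consider two wires $w, w' \in W_\rightarrow \cup W_\downarrow$ that both cross the same segment $\sigma$ of $e$, say in links $\ell$ and $\ell'$. Since all wires crossing an edge cross the same segment (the invariant being maintained), and by Property~(3) the wires cross $\sigma$ the same way they cross $e$ in $\Gamma_O$, I first observe that both $\ell$ and $\ell'$ are oriented the same way (both perpendicular to $\sigma$), so it makes sense to compare their spiralities directly even when one wire lies in $W_\rightarrow$ and the other in $W_\downarrow$ (using Lemma~\ref{lem:sameSpirality} to reconcile the two spirality conventions at a crossing if needed).

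The core of the argument is a closed-curve/rotation-number computation. I would take a large axis-aligned rectangle $R$ containing the whole drawing, intersecting the first and last links of both $w$ and $w'$, exactly as in Lemma~\ref{lem:sameSpirality}. The two wires together with $R$ bound a region; I then cut out a simple closed curve $C$ as follows: follow $w$ from its first link to $\ell$, cross $\sigma$ from $\ell$ to $\ell'$ (a short straight piece along the segment $\sigma$, contributing no turns), follow $w'$ backwards from $\ell'$ to its first link, and close up along $R$ (three left turns, just as before). Because $w$ and $w'$ cannot properly cross each other between their first links and $\sigma$ — they are either both in $W_\rightarrow$ (non-crossing by Property~(2)) or one of each (crossing exactly once, and that single crossing can be routed outside $C$), or, in the rerouting setting, they are parallel lowest/shortest paths — this curve $C$ is simple. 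A counterclockwise tour of a simple closed curve has total turning $4$. The portion of $C$ coming from $R$ contributes $3$, the short hop across $\sigma$ contributes $0$, and the contributions of the two partial wires are $s(\ell)$ (traversed forwards, from spirality $0$ at $\ell_1$) and $-s(\ell')$ (traversed backwards, from spirality $0$ at $\ell'_1$). Hence $3 + s(\ell) - s(\ell') = 4 \cdot (\pm 1)$, and the orientation / the sign of the "$\pm$" is pinned down by which of the four faces of $R \cap w \cap w'$ the curve $C$ encloses; choosing the face so that $C$ is counterclockwise gives $3 + s(\ell) - s(\ell') = 4$ — wait, that yields $s(\ell) - s(\ell') = 1$, so I must instead route the hop across $\sigma$ on the side that absorbs one turn, or equivalently observe (as in Fig.~\ref{fig:segment}) that crossing $\sigma$ in a consistent direction introduces no net turn while the correct face has turning contributions summing so that $s(\ell) = s(\ell')$. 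The clean way to state it: the four regions cut out by $w$, $w'$ and $R$ each have turning $4$; picking the one bounded by the first links of $w$ and $w'$ and by the two links $\ell, \ell'$, the three turns from $R$ plus the (zero) turn at the $\sigma$-crossing plus $s(\ell) - s(\ell')$ must equal $4$ minus the one extra left turn contributed by the corner of that face, giving $s(\ell) = s(\ell')$.

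I expect the main obstacle — and the only place needing genuine care — to be making the face-selection and turning-count bookkeeping precise enough that the constant comes out to $0$ rather than $\pm 2$ or $\pm 4$: exactly the same delicate sign-chasing that appears in the proofs of Lemma~\ref{lem:sameSpirality} and Lemma~\ref{lem:intersectSpiral}. The transitivity extension (from two links to all links crossing $\sigma$) is then immediate, since "having the same spirality" is an equivalence relation, so it suffices to prove the two-link case. A secondary point to verify is that the short connector along $\sigma$ between $\ell$ and $\ell'$ genuinely contributes no turning and crosses no obstacle vertices — this is fine because $\sigma$ is a single straight segment of the edge $e$ containing no vertices in its interior, and the invariant guarantees both wires hit that same $\sigma$.
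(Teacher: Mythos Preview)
Your approach is the right idea (a turning-number count on a simple closed curve built from the two wires and a connector along $\sigma$), and it is essentially the paper's approach, but two concrete errors in the bookkeeping prevent it from closing.

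First, the hop along $\sigma$ does \emph{not} contribute zero turns. Both links $\ell$ and $\ell'$ are perpendicular to $\sigma$, so when you leave $w$ at $\ell$, walk along $\sigma$, and enter $w'$ at $\ell'$, you make one $90^\circ$ turn at each end of the hop. In the paper's count these are exactly the ``two left turns at $\sigma$''. Your equation $3 + s(\ell) - s(\ell') = 4$ is off precisely because you dropped these two turns while keeping the three turns from the rectangle, and you then (correctly) notice something is wrong without identifying the cause.

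Second, you allow $w$ and $w'$ to come from different families, which is both unnecessary and the source of the mismatched ``$3$''. The paper's first sentence settles this: an edge $e$ has a single orientation in $\Gamma_O$, so by Property~(3) only wires from one of $W_\rightarrow$, $W_\downarrow$ ever cross $e$. Hence $w,w'$ are from the \emph{same} family, they do not cross each other (Property~(2)), and the closing connector at the boundary is a single straight segment joining their first links, contributing two left turns rather than three. With both corrections the count becomes $s(\ell) + 2 - s(\ell') + 2 = 4$, i.e.\ $s(\ell)=s(\ell')$, and the transitivity step you mention finishes the proof.
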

\begin{proof}
Each edge $e$ has a unique orientation in $\Gamma_O$, hence either only wires from $W_\rightarrow$ or $W_\downarrow$ intersect $e$.
All wires cross $e$ in the same direction. Assume without loss of generality that $e$ is horizontal in $\Gamma_O$ and thus only intersected by wires from $W_\downarrow$. Consider two adjacent wires $w,w'\in W_\downarrow$ intersecting segment $\sigma$ of edge $e$.
Consider an additional horizontal segment that would connect link $\ell_1$ of $w$ and $\ell'_1$ of $w'$ (if needed extend $\ell_1$ or $\ell'_1$ upwards).
The area enclosed by wires $w,w'$, the segment $\sigma$, and the extra horizontal segment forms a simple cycle (see Fig.~\ref{fig:segment}). In this cycle there are two left turns at $\sigma$ and two left turns at the additional horizontal segment, and the remaining bends belong to $w$ and $w'$.
If $x$ and $x'$ are the spiralities of $w$ and $w'$ when intersecting $\sigma$, then $x + 2 - x' + 2 = 4$, and thus $x = x'$.
\end{proof}

If multiple links cross a single edge, we execute a slide on only one of these links. We then reroute the remaining wires crossing the edge. This may introduce links with higher absolute spirality, but we can eliminate these links using $O(1)$ linear morphs without affecting the complexity of the drawing. This is formalized in the following lemma.

\begin{lemma}\label{lem:reroute}
The maximum absolute spirality of all links can be reduced while increasing the complexity of each edge by at most $O(1)$.
\end{lemma}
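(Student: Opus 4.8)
The plan is to make precise the informal description that precedes the lemma: for each edge $e$ crossed by several links of maximum absolute spirality $s$, we pick one crossing link, perform the zigzag-eliminating slide on it, and then reroute the remaining crossing wires so that they follow alongside the new (reduced-spirality) link instead of crossing $e$ at the old location. I would first set up notation following Figure~\ref{fig:reroute}: let $\sigma = (u,v)$ be the segment of $e$ crossed, let $w_1,\dots,w_m$ be the wires crossing $\sigma$ (all from the same set, say $W_\downarrow$, by Lemma~\ref{lem:segmentSpirality}, and all of spirality $s$ at $\sigma$). Perform the slide on $w_1$; this adds exactly two bends to $\sigma$ (Fig.~\ref{fig:reroute}(b)) and replaces the crossing link of $w_1$ by a link of spirality $s-1$.

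The core step is the rerouting. For each remaining $w_t$ ($t\ge 2$), I would replace the portion of $w_t$ near $\sigma$ by a detour that runs parallel to the reduced link of $w_1$: instead of crossing $\sigma$ at spirality $s$, the wire now threads through the newly created gap in $\sigma$ alongside $w_1$. Concretely, the rerouted wire crosses $\sigma$ at the new location where $w_1$ crosses, so its crossing link now has spirality $s-1$; the detour introduces some new links of spirality $s$ and $s+1$ (the "turn out, turn back" pieces, as in Fig.~\ref{fig:reroute}(c)), but — and this is the key point — these new links lie in the thin region between $w_1$'s new position and $\sigma$, which by construction contains no vertex of $G$ and no edge other than $e$, and the links do not cross $e$ either. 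I would argue that the rerouted wires are still homotopic to the originals (the detour sweeps across the region bounded by $w_1$, the old and new positions, which is obstacle-free), so Properties (1)–(3) of a proper set of wires are preserved, and that overlaps between wires can be removed by perturbation exactly as before.

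After rerouting every edge in this way, all links crossing any edge have spirality $s-1$ (by the invariant we are maintaining, together with Lemma~\ref{lem:segmentSpirality}), and the only links left with absolute spirality $\ge s$ are the freshly created detour links, none of which crosses any edge. Such "free" links can be removed by zigzag-eliminating slides that do not touch any edge: I would handle spirality $s+1$ first, then $s$, using $O(1)$ linear morphs per spirality value (combining same-spirality slides via Lemma~\ref{lem:merge}), and by Lemma~\ref{lem:independence} this does not increase the spirality of wires in the other set. The net effect is that the maximum absolute spirality drops by one, while each edge gained only the $O(1)$ bends from the single slide performed on it (the detours added no bends to edges). The bound on the number of linear morphs is $O(1)$ per spirality value handled, i.e.\ $O(1)$ overall for one unit of reduction.

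The main obstacle I expect is the geometric bookkeeping of the rerouting: one must verify that the detour for $w_t$ can be drawn as an orthogonal polyline that (i) stays inside the obstacle-free sliver, (ii) does not cross $w_1$ or the other rerouted wires in a way that violates Property (2) — they should all run "in parallel", crossing the same things the same number of times — and (iii) that the resulting new links genuinely have spirality only $s$ or $s+1$ and not more, so that the clean-up in the last step only has to deal with two spirality values and hence $O(1)$ morphs. Making (iii) airtight requires a short turn-counting argument of the same flavour as Lemmas~\ref{lem:sameSpirality} and~\ref{lem:segmentSpirality}, applied to the small cycles formed by a rerouted wire together with the old wire position and $\sigma$.
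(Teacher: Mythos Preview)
Your proposal is correct and follows essentially the same approach as the paper: perform a single slide per crossed edge (adding two bends), reroute the remaining wires through an $\varepsilon$-band along the edge so they cross the newly created segment with spirality $|s|-1$, and then clean up the non-edge-crossing links of spirality $|s|+1$ and $|s|$ with $O(1)$ combined slides. The paper's proof is terser on exactly the geometric bookkeeping points you flag as obstacles, but the structure and the key observations (the rerouted links have spirality at most $|s|+1$ and cross no edge, so the clean-up costs $O(1)$ morphs and no new bends) are the same.
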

\begin{proof}
Let $s$ be the spirality with the maximum absolute value.
For each edge $e$ crossed by multiple links with spirality $s$, perform a slide for a single crossing link.
By our invariant all links cross $e$ in the same segment $\sigma$ (see Fig.~\ref{fig:reroute}(a)).
Performing a slide on an arbitrary crossing link introduces two new bends in $e$ (see Fig.~\ref{fig:reroute}(b)).
We now reroute the remaining crossing wires in an $\varepsilon$ band along the edge to cross $e$ in the newly created segment (see Fig.~\ref{fig:reroute}(c)). As for a small enough $\varepsilon$ no edge or other wire will be in the area of rerouting, the wires remain a proper set.
The absolute spirality of the crossing links is now $|s|-1$.
The remaining newly created links have absolute spirality at most $|s|+1$.

By Lemma~\ref{lem:merge} $O(1)$ linear morphs are sufficient to perform a slide on all selected crossing links. Similarly $O(1)$ linear morphs are sufficient to remove all links of absolute spirality $|s|+1$ and then all links of absolute spirality $|s|$ (see Fig.~\ref{fig:reroute}(d)). As none of the latter links intersect an edge this does not affect the complexity of the drawing.
\end{proof}

\subparagraph*{Removing excess bends}
Rerouting wires ensures that every edge gathers only $O(1)$ bends when reducing the spirality of all intersecting links by one. However as the maximum absolute spirality, as well as the complexity of $\Gamma_I$, is $O(n)$, the total complexity of the drawing may still become $O(n^2)$ during the morph. We show that using an additional $O(1)$ linear morphs per iteration we can also maintain $O(n)$ complexity.

\begin{lemma}\label{lem:spiraltransfer}
At any point during the morph, the bends with left orientation in an edge are separated from the bends with right orientation by the wires crossing the edge.
\end{lemma}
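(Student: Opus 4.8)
The plan is to track, for a single edge $e$ and the wires that cross it, an invariant relating the cyclic order of bends along $e$ to the order in which the crossing wires meet $e$. Throughout the morph every edge has a well-defined orientation inherited from $\Gamma_O$, so by Lemma~\ref{lem:segmentSpirality} (and its proof, which shows adjacent crossing wires have equal spirality at $e$) all wires crossing a given segment of $e$ do so with the same spirality; combined with the rerouting invariant that all wires crossing $e$ cross the same segment, this means that at any fixed time all wires crossing $e$ have a common spirality value $\sigma_e$. The key observation is that a bend on $e$ is created only by a zigzag-eliminating (or bend-creating) slide whose middle link is a wire link crossing $e$, and such a slide places the new pair of bends on $e$ precisely on either side of the crossing wire's new link. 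A left-oriented bend appears on the side where the slide pushes material one way, a right-oriented bend on the other; the orientation of the created bend is determined by whether the eliminated link had positive or negative spirality. So the claim is the invariant: reading the bends of $e$ along the edge from one endpoint, we encounter first all right-oriented bends, then the crossing wires (all at the same ``level'' $\sigma_e$), then all left-oriented bends — or the mirror image, depending on the sign of $\sigma_e$.

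First I would verify the base case: in $\Gamma_I$ every edge is a straight segment with no bends, so the statement holds vacuously. Then I would analyze the inductive step, which is one iteration of the refined algorithm of Section~\ref{sec:linearComplexity}, i.e.\ reducing the maximum absolute spirality $s$ by one via (i) a single slide per edge crossed by links of spirality $s$, (ii) rerouting the remaining crossing wires into an $\varepsilon$-band, and (iii) the clean-up slides that eliminate the leftover links of absolute spirality $|s|+1$ and $|s|$. For step (i): the slide on the chosen link of spirality $s$ creates two bends on $e$; because the adjacent links of that wire lie on opposite sides of the link's line (it has maximum absolute spirality) and the slide moves one side relative to the other, the two new bends have opposite orientations and straddle the wire's new link — so they join the ``left pile'' and the ``right pile'' on the correct sides, preserving the invariant. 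For step (ii): rerouting happens inside an $\varepsilon$-band that contains no bends of $e$ and no edges, so it changes neither the bend multiset of $e$ nor their left/right partition; it only moves the crossing point of the rerouted wires to the newly created segment, which by construction is still on the correct side of both new bends, keeping all crossing wires together and between the two piles. For step (iii): the clean-up slides operate on wire links that cross no edge, so they create no new bends on $e$ and merely straighten wires; one must check they do not reorder existing bends of $e$, which follows from the fact (as in the proof of Lemma~\ref{lem:merge}) that these slides are monotone in the coordinate direction orthogonal to $e$, hence preserve the order of the already-present bends along $e$.

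The main obstacle I expect is step (i), specifically pinning down the sign bookkeeping: one must argue that the two bends created by a slide on a maximum-spirality link are genuinely of opposite orientation and that the left-oriented one lands on the side of the wire consistent with the existing left pile (and symmetrically for right), for every configuration of the zigzag (horizontal vs.\ vertical, spirality $+s$ vs.\ $-s$, and the four reflections). This is a finite case analysis driven by the slide geometry of Section~\ref{sec:prelim}, but it is the crux, because it is where the ``left/right'' partition is actually shown to be respected rather than merely the ``bends cluster near the wires'' weaker statement. A secondary point to be careful about is that several edges and several wires are handled in one combined linear morph (Lemma~\ref{lem:merge}); I would argue the invariant edge-by-edge, noting that the combined morph restricted to the neighborhood of a single edge $e$ acts exactly like the individual slides on the links crossing $e$ plus a monotone ambient motion, so the per-edge argument composes without interference.
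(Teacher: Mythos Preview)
Your proposal is correct and follows essentially the same inductive strategy as the paper: the base case is the straight-line drawing $\Gamma_I$, and the inductive step analyzes a zigzag-eliminating slide on a maximum-spirality link crossing $e$, checking that the two newly created bends land as a right bend on one side and a left bend on the other side of the (new) crossing link. The paper's own proof is more compressed---it dispatches your anticipated ``sign bookkeeping'' case analysis by a single WLOG assumption ($s(\ell)>0$, $u$ on the left) and then reads off directly that the slide puts the right bend on the $u$-side and the left bend on the $v$-side---and it does not separately treat your steps (ii) and (iii), since rerouting and the clean-up slides on non-crossing links create no bends on $e$; but the core argument is the same.
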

\begin{proof}
Consider an arbitrary wire $w$ crossing edge $e=(u,v)$.
Let $\ell$ be the link of $w$ that crosses $e$ and assume without loss of generality that $s(\ell)>0$ and that $u$ is on the left side of $w$.
Let $\sigma$ be the segment of $e$ crossed by $\ell$.
We show that when traversing $e$ all right oriented bends occur before the crossing with $w$ and all left oriented bends occur after.
The claim trivially follows.
We consider the orientation of the bends when traversing $e$ from $u$ to $v$.

Clearly the claim already holds in $\Gamma_I$.
Now consider a drawing $\Gamma$ during the morph, where $\ell$ has maximum absolute spirality, and assume the property holds in $\Gamma$.
As $s(\ell)>0$, $\ell$ must be preceded by a left turn and followed by a right turn. Performing a zigzag-eliminating slide on $\ell$ will merge these links into a new link $\ell'$.
A right bend $u'$ is introduced in $\sigma$ left of the intersection with $\ell'$, and thus on the side of $u$, and a left bend $v'$ right of the intersection.
But then when traversing $e$ all right-oriented bends in $e$ occur before the crossing with $w$ and all left-oriented bends occur after.
\end{proof}

We define a \emph{cell} as the area enclosed by two consecutive wires in $W_\rightarrow$ and two consecutive wires in $W_\downarrow$.
By the properties of a proper set of wires, each cell can contain at most one vertex and each edge incident to such a vertex must intersect a different wire.
We now use the following simple approach. Let $\Gamma$ be the drawing after an arbitrary iteration of the algorithm. If a cell in $\Gamma$ contains at least two bends on each edge incident to the vertex of that cell, then we perform $O(1)$ linear morphs to eliminate a bend on each of the incident edges. We can combine the linear morphs for all separate cells.

\begin{figure}[b]
  \centering
  \includegraphics[]{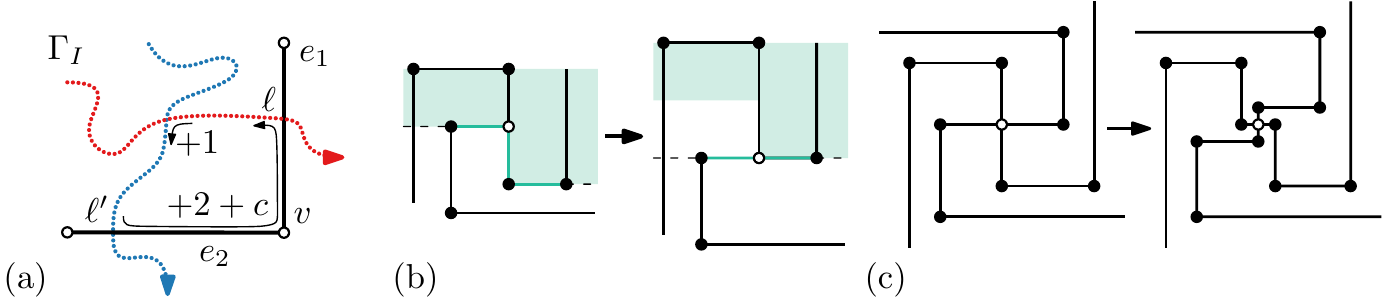}
  \caption{\textbf{(a)} The difference of the spirality of links $\ell$ and $\ell'$ in $\Gamma_I$ is at most two as a counterclockwise tour increases spirality by four. The value of $c$ depends on the actual configuration of $e_1$ and $e_2$ and ranges between $-1$ and $1$. \textbf{(b)} A vertex of degree at most three where all incident edges have at least two bends can be simplified through a series of $O(1)$ zigzag-removing slides. \textbf{(c)} The edges incident to a vertex of degree four can be offset an epsilon amount, after which zigzag-removing slides can reduce the number of bends.}
  \label{fig:spiralityEdges3}
\end{figure}

\begin{lemma}\label{lem:constantBendDifference}
At any point during the morph, inside a single cell for any pair of edges the number of bends differs by at most a constant.
\end{lemma}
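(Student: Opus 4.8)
I would compare, for two edges $e_1,e_2$ incident to the (unique) vertex $v$ of the cell, the numbers $\beta_1,\beta_2$ of their bends lying in the cell; this suffices, since by the properties of a proper set of wires every edge meeting the cell is incident to $v$ (in $\Gamma_O$ an edge crossing two of the four bounding wires would have to pass through $v$, and by Property~(3) the set of edges met by each wire is fixed throughout). First a structural observation: by our invariant all wires crossing an edge $e$ cross a single segment $\sigma_e$, and exactly one of the four bounding wires crosses $e$ (distinct incident edges cross distinct bounding wires), necessarily inside $\sigma_e$; hence the portion of $e_i$ in the cell is precisely the sub-path $\gamma_i$ from $v$ to the crossing point $p_i\in\sigma_{e_i}$, and by Lemma~\ref{lem:spiraltransfer} all its bends have the same orientation $\epsilon_i\in\{+1,-1\}$.

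Next I would build one simple closed curve $C$ through $v$ that reads off $\beta_1$ and $\beta_2$ simultaneously. Let $w_1,w_2$ be the bounding wires crossing $e_1,e_2$, and $s_1,s_2$ the spiralities of their links at $p_1,p_2$. If $w_1,w_2$ have different types they cross exactly once, at a point $q$ where (Lemma~\ref{lem:sameSpirality}) both have spirality $s_q$; let $C$ be $\gamma_1$, then the arc of $w_1$ from $p_1$ to $q$, then the arc of $w_2$ from $q$ to $p_2$, then $\gamma_2$ reversed. If $w_1,w_2$ have the same type, insert an arbitrary wire $w_3$ of the other type and route $C$ through its crossings with $w_1$ and with $w_2$. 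Using that wires do not self-cross, that each wire meets each edge at most once, and that $w_1,w_2$ (being of the relevant types) are disjoint from $e_2,e_1$ respectively, one checks that $C$ is simple, so a counterclockwise traversal has total turn $4$. Along $\gamma_1$ the bends contribute $\epsilon_1\beta_1$, along $\gamma_2$ reversed they contribute $-\epsilon_2\beta_2$, and the wire arcs contribute a telescoping sum of spirality differences: walking a wire from a link of spirality $a$ to a link of spirality $b$ accumulates turn $b-a$ by definition, so the arcs contribute $(s_q-s_1)+(s_2-s_q)=s_2-s_1$ (in the same-type case the intermediate spiralities at the $w_1$--$w_3$ and $w_3$--$w_2$ crossings cancel by Lemma~\ref{lem:sameSpirality}). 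The $O(1)$ corners of $C$ contribute $O(1)$, so
\[
\epsilon_1\beta_1-\epsilon_2\beta_2+(s_2-s_1)=O(1).\qquad(\star)
\]

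It remains to bound $|s_1-s_2|$. In the unified drawing $\Gamma_I$ every edge is straight, so $\beta_1=\beta_2=0$ and $(\star)$ immediately gives $|s_1-s_2|=O(1)$ in $\Gamma_I$ (this is the estimate illustrated in Fig.~\ref{fig:spiralityEdges3}(a)). Then I would argue that $|s_1-s_2|$ never increases during the morph: in a spirality-reduction step the spirality of the wires crossing an edge changes by at most one, and only for edges whose crossing links are currently of maximum absolute spirality, so $|s_i|$ drops by one only while it equals the current maximum; and the excess-bend removal of Section~\ref{sec:linearComplexity} acts on all edges incident to $v$ symmetrically (it removes one bend from each incident edge, or from none). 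In every case $|s_1-s_2|$ is non-increasing, hence $O(1)$ throughout, and $(\star)$ then yields $\epsilon_1\beta_1-\epsilon_2\beta_2=O(1)$; since $\beta_1,\beta_2\ge 0$ this bounds $|\beta_1-\beta_2|$ by a constant whichever way the signs fall. For $\deg(v)\le 4$ the bound for every pair of incident edges follows by the triangle inequality.

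The step I expect to be the main obstacle is making the cycle argument rigorous in the presence of wires of complexity up to $\Omega(n^2)$: one must be sure that $C$ is simple and that the contribution of each (possibly very convoluted) wire arc is exactly the difference of the spiralities at its two endpoints. What makes this tractable is that $C$ follows each wire only between two consecutive crossing points, so the cumulative turn along such an arc equals that spirality difference by the very definition of spirality, independently of the arc's shape; all that is genuinely needed is that the arcs and the sub-paths $\gamma_i$ introduce no spurious intersections, which follows from the crossing properties of a proper set of wires (Properties~(2) and~(3) together with Lemma~\ref{lem:sameSpirality}) and from the fact that $\gamma_1,\gamma_2$ lie inside the otherwise empty cell.
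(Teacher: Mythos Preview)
Your argument is correct and reaches the same conclusion, but the route differs from the paper's in where the heavy lifting happens. The paper carries out the cycle argument only in $\Gamma_I$, where the incident edges are single segments, so the simple cycle through $v$, the two wires, and their crossing is immediate to certify; this yields $|s_1-s_2|\le 2$ there. From that point the paper argues purely operationally: each iteration introduces exactly one bend inside the cell on an edge precisely when the spirality of its crossing links (all equal, by Lemma~\ref{lem:segmentSpirality}) is the current maximum and is reduced by one, and the bend-removal step acts uniformly on all incident edges. Hence the bend count on $e_i$ inside the cell tracks the number of spirality reductions applied to its crossing links, and the difference of those counts is bounded by the initial spirality difference.

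You instead establish the turn identity $(\star)$ at \emph{every} intermediate drawing, which obliges you to argue simplicity of a cycle that now contains bent edge portions $\gamma_i$ and possibly very long wire arcs, and then to track the evolution of $|s_1-s_2|$ separately through slides, rerouting, and bend removal. What your approach buys is an explicit invariant linking bends and spirality at all times, and a clean case analysis on the signs $\epsilon_i$; what the paper's approach buys is that the only cycle it ever needs lives in the straight-line drawing $\Gamma_I$, after which the remaining bookkeeping is a one-line counting argument with no further geometry.
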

\begin{proof}
The statement is vacuously true for cells without a vertex, so consider an arbitrary vertex $v$ and the cell it is contained in.
To prove the statement for $\Gamma$ we first consider the spirality of the links intersecting incident edges of $v$ in $\Gamma_I$.
We show that in $\Gamma_I$ for two incident edges $e_1$, $e_2$ that are adjacent in the cyclic order at $v$ the spirality of the links crossing $e_1$ differs by at most $2$ from those crossing $e_2$.
During the morph we always perform slides on links with maximum absolute spirality and introduce exactly 1 bend inside a cell to reduce the spirality of all links intersecting an edge.
It follows that at any time the difference in the number of bends in incident edges inside the cell is bounded by a constant. %using "a constant" as the argument gives 4, but in reality the truth is 2.

Edges $e_1$ and $e_2$ have two different possible configurations in $\Gamma_O$. Either one is vertical and the other horizontal, or both are horizontal (vertical).
We consider the case where one is horizontal and the other vertical.
Without loss of generality consider that $e_1$ and $e_2$ are above, respectively, to the left of $v$ in $\Gamma_O$.
By construction $e_1$ and $e_2$ are intersected by a pair of wires $w\in W_\rightarrow$ and $w'\in W_\downarrow$, and they cross before crossing $e_1$ respectively $e_2$.
Wires $w$ and $w'$ together with edges $e_1$ and $e_2$ must then enclose a simple cycle in $\Gamma_O$. As the wires form a proper set this must also be the case in $\Gamma_I$, however, the orientation of the edges may be different in $\Gamma_I$.
The cycle contains three left-corners by construction (see Fig.~\ref{fig:spiralityEdges3}(a)). The turn at $v$ depends on the configuration of $e_1$ and $e_2$ in $\Gamma_I$.
Let $\ell,\ell'$ be the links of $w,w'$ crossing $e_1$ and $e_2$ and let $k$ be the spirality of the links at the crossing of $w$ and $w'$.
We have that $(k-s(\ell))+(s(\ell')-k)+3+c=4$ for $-1\leq c\leq 1$, and thus $|s(\ell')-s(\ell)|\leq 2$.

For the case where both edges are horizontal (vertical) in $\Gamma_O$ a similar argument holds, but now the cycle is formed by two wires from $W_\rightarrow$ and one wire from $W_\downarrow$ resulting in one more left turn. We obtain $s(\ell')-s(\ell)+4+c=4$ for $-1\leq c\leq 1$, resulting in $|s(\ell')-s(\ell)|\leq 1$.
\end{proof}

\begin{corollary}\label{cor:sameSpirality}
All links crossing incident edges of a degree four vertex have the same spirality in $\Gamma_I$.
\end{corollary}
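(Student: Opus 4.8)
The plan is to read the claim off the proof of Lemma~\ref{lem:constantBendDifference}, exploiting the fact that a degree-four vertex of an orthogonal drawing is extremely rigid. Let $v$ be a vertex of degree four, with incident edges $e_1,e_2,e_3,e_4$ listed in counterclockwise cyclic order. First I would note that in $\Gamma_I$ each edge is a single straight segment (recall $\Gamma_I$ is a straight-line drawing of the unified graph), so by Lemma~\ref{lem:segmentSpirality} all links crossing a given incident edge $e_i$ share one common spirality $s_i$; it therefore suffices to prove $s_1=s_2=s_3=s_4$.

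The structural observation that makes this work is that at a degree-four vertex of an orthogonal drawing all four angles equal exactly $90^\circ$: there are four angles, each a positive multiple of $90^\circ$, summing to $360^\circ$. This holds at $v$ both in $\Gamma_I$ and in $\Gamma_O$, and in $\Gamma_O$ it implies that the two edges of any cyclically consecutive pair $(e_i,e_{i+1})$ (indices mod $4$) are perpendicular, i.e.\ one horizontal and one vertical. I would then run the argument from the proof of Lemma~\ref{lem:constantBendDifference} on each consecutive pair $(e_i,e_{i+1})$: since each such pair is a horizontal/vertical pair in $\Gamma_O$, that argument builds the simple cycle bounded by the two wires crossing $e_i$ and $e_{i+1}$ together with the two corresponding edge-segments, and yields $s_{i+1}-s_i = 1 - c_i$, where $c_i$ is the turn made at $v$ along that cycle. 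Because the angle of $v$ inside this cycle is exactly the $90^\circ$ wedge between $e_i$ and $e_{i+1}$, it is a convex corner, so $c_i = 1$ and hence $s_{i+1}=s_i$. (Alternatively, it is enough to note that the $90^\circ$ angle forces $c_i\in\{-1,+1\}$, so each signed difference $s_{i+1}-s_i$ lies in $\{0,2\}$; since $\sum_{i}(s_{i+1}-s_i)=0$ telescopes around the four-cycle, all four differences vanish.) Applying this for $i=1,\ldots,4$ gives $s_1=s_2=s_3=s_4$, which is the corollary.

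The only delicate point, and the part I would be most careful about, is the sign bookkeeping: confirming that the turn $c_i$ at $v$ in each of the four cycles is a left turn ($c_i=1$) — equivalently, that $\Gamma_I$ presents the corner at $v$ inside that cycle the same way $\Gamma_O$ does. This follows because $\Gamma_I$ and $\Gamma_O$ share a combinatorial embedding and the wires are proper, so each of the four cycles is combinatorially identical in the two drawings and the wedge at $v$ is the $90^\circ$ interior wedge in both. Everything else is an immediate invocation of Lemmas~\ref{lem:segmentSpirality} and~\ref{lem:constantBendDifference}.
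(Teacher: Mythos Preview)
Your proposal is correct and follows essentially the same approach as the paper: the paper's proof simply observes that for a degree-four vertex two cyclically adjacent incident edges have a \emph{fixed} relative configuration (both in $\Gamma_O$ and in $\Gamma_I$), which forces $c=1$ in the equation from the proof of Lemma~\ref{lem:constantBendDifference} and hence $s(\ell)=s(\ell')$. Your write-up spells out why the configuration is fixed (all four angles are $90^\circ$) and adds a pleasant telescoping variant, but the underlying argument is the same.
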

\begin{proof}
For a vertex of degree four two incident edges adjacent in the cyclic order have a fixed relative configuration. We require $c=1$, resulting in $s(\ell)=s(\ell')$.
\end{proof}

\begin{lemma}\label{lem:spiralityEdges}
If all edges incident to a vertex $v$ have at least two bends inside the same cell, then one bend can be removed from each edge without affecting the rest of the drawing.
\end{lemma}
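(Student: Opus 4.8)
The plan is to show that, inside the cell, the drawing around $v$ is exactly a \emph{vertex twist} (Fig.~\ref{fig:twist}(a)), and that deleting one bend from every edge incident to $v$ amounts to undoing a single quarter-turn of this twist --- an operation that can be carried out by $O(1)$ zigzag-eliminating slides whose combined effect is confined to a neighbourhood of $v$ inside the cell, so that the rest of the drawing is unchanged.

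First I would pin down the local picture. Since $v$ lies in a cell, the properties of a proper set of wires (each cell holds at most one vertex, each incident edge crosses a distinct bounding wire) give that $v$ has degree at most four and that its incident edges $e_1,\dots,e_d$ ($d\le 4$) cross pairwise distinct bounding wires $w_1,\dots,w_d$, each exactly once: in $\Gamma_O$ the $w_i$ are straight lines, so $e_i$ meets $w_i$ once and meets no other bounding wire at all, and, the wires being proper throughout, the same holds in $\Gamma$, with the part of $e_i$ from $v$ up to its crossing with $w_i$ being exactly the part of $e_i$ inside the cell. Let $\sigma_i$ be the segment of $e_i$ carrying that crossing; by the rerouting invariant every wire meeting $e_i$ meets it on $\sigma_i$, so Lemma~\ref{lem:spiraltransfer} gives that on the $v$-side of $\sigma_i$ all bends of $e_i$ have the same turn orientation --- the part of $e_i$ inside the cell is a \emph{monotone staircase}, with at least two bends by hypothesis. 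Since $\Gamma$ and $\Gamma_O$ share the combinatorial embedding, the first segments of $e_1,\dots,e_d$ leave $v$ in $d$ distinct axis directions in the same cyclic order as in $\Gamma_O$. A monotone staircase rotates its direction by a fixed multiple of $90^\circ$ per bend; if two incident edges wound in opposite senses, their near-$v$ parts would be oppositely spiralling curves issuing from $v$ and would be forced to cross, violating planarity. Hence all arms spiral the same way and, by the fixed cyclic order, by the same net amount, while Lemma~\ref{lem:constantBendDifference} makes their bend counts equal up to a constant. The configuration near $v$ is therefore a star twisted by some multiple of $90^\circ$ relative to $\Gamma_O$, and undoing one $90^\circ$ step peels off the innermost segment and innermost bend of every arm --- deleting exactly one bend from each $e_i$ and leaving a (less) twisted star, still a valid orthogonal drawing with the correct cyclic order at $v$.

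It remains to realise this quarter-untwist. If $d\le 3$ there is an axis direction free at $v$, cyclically adjacent to the arm that rotates into it; processing the arms in that cyclic order and contracting each innermost segment with a bounded number of zigzag-eliminating slides chosen so their large motions cancel, the free direction always supplies the room the next step needs, so no overlap is created and the net change is confined to a neighbourhood of $v$ (Fig.~\ref{fig:spiralityEdges3}(b)). If $d=4$ all four directions at $v$ are occupied, so I would first apply one linear morph offsetting the four arms by $\varepsilon$ near $v$ --- a small bend-creating slide on each arm that opens up an $\varepsilon$-square of slack around $v$; by Corollary~\ref{cor:sameSpirality} the four crossing links have equal spirality, which makes this offset symmetric enough to be a planar orthogonal morph, and it supplies the room to run the same sequence of zigzag-eliminating slides as before (Fig.~\ref{fig:spiralityEdges3}(c)). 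Each step is a planarity- and orthogonality-preserving linear morph by the reasoning of Section~\ref{sec:prelim}, only $O(1)$ of them are used, and nothing outside the cell is changed.

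\textbf{Main obstacle.} The delicate part is the rigorous justification of the local structure --- above all the claim that every arm twists in the same sense by the same amount, so that one coordinated untwist simultaneously removes a bend from \emph{all} incident edges without disturbing the cyclic order --- together with checking that the quarter-untwist really is a constant number of valid slides with a local net effect. The degree-$4$ case is the most delicate: one must show that the $\varepsilon$-offset (whose admissibility rests on Corollary~\ref{cor:sameSpirality}) creates exactly the slack required and that the ensuing slides collide neither with the outer portions of the arms nor with the bounding wires $w_1,\dots,w_d$.
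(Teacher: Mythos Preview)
Your proposal is correct and follows essentially the same approach as the paper: the same case split on $\deg(v)\le 3$ versus $\deg(v)=4$, the same use of a free axis direction for low degree, the same $\varepsilon$-offset trick for degree four (invoking Corollary~\ref{cor:sameSpirality}), and the same figures.

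One point worth noting: for the degree-four case the paper obtains the ``all arms bend the same way'' conclusion more directly than you do. You argue it topologically (``oppositely spiralling curves issuing from $v$ would be forced to cross''), which is plausible but not made rigorous, and you only bring in Corollary~\ref{cor:sameSpirality} later to justify the $\varepsilon$-offset. The paper instead uses Corollary~\ref{cor:sameSpirality} up front: since all four crossing links have the same spirality in $\Gamma_I$, they are in particular all positive or all negative, and then Lemma~\ref{lem:spiraltransfer} immediately gives that all bends inside the cell have the same orientation. This is cleaner and avoids the informal planarity step. Your further claim that the arms spiral ``by the same net amount'' is stronger than Lemma~\ref{lem:constantBendDifference} actually gives (equal only up to a constant) and is not needed --- having at least two bends on each arm is all the argument uses.
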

\begin{proof}
If $v$ has at most three incident edges, then there always exists a series of zigzag-removing slides that, in a cyclic order, removes one bend from each of the incident edges  without affecting the other edges and without losing planarity (see Fig.~\ref{fig:spiralityEdges3}(b)).
So assume that $v$ has four incident edges.

%We remark that as a corollary of Lemma~\ref{lem:constantBendDifference} we obtain that in $\Gamma_I$ the spirality of all links crossing incident edges of $v$ is the same.
%\kevin{This is not what we prove. It is hidden in the proof though... Do we need it? If so, this should be added to the statement of Lemma 12}
By Corollary~\ref{cor:sameSpirality} the spirality of all intersecting links of incident edges is the same in $\Gamma_I$.
%Specifically this implies that the spirality of all links in $\Gamma_I$ that intersect an incident edge is either positive or negative.
Specifically this the spirality is either positive or negative for all intersecting links.
Using Lemma~\ref{lem:spiraltransfer} this implies that all edges will either form only left turns or only right turns inside this cell during the morph.
Assume without loss of generality that all incident edges have only left turns inside this cell and each has at least two left turns. We simultaneously offset all segments incident to $v$ by an epsilon amount, creating a right bend near $v$ in each edge (see Fig.~\ref{fig:spiralityEdges3}(c)). As we only move the segments an epsilon amount we can safely do so without causing new intersections.
Now every incident edge starts with a right-bend followed by a left-bend. Using zigzag-removing slides we remove the newly introduced bend and one of the left-bends.
As these zigzags do not intersect any edge or wire this does not change the spirality of any wire or increase the complexity of any edge.
We can merge the different moves for all vertices together into $O(1)$ linear morphs.
\end{proof}

As each iteration of the refined algorithm increases the complexity of each edge by at most $2$ bends, it is sufficient to reduce complexity of the edges once per iteration.
By Lemma~\ref{lem:spiralityEdges} we can simultaneously simplify all cells where all edges have at least two bends using $O(1)$ linear morphs.
And by Lemma~\ref{lem:constantBendDifference} this requirement is already met when cells contain only $O(1)$ bends.
Cells that do not contain a vertex also do not contain bends as all wires intersect in the same segment of an edge.
It directly follows that the complexity of the drawing is $O(n)$ at all times.
Furthermore, we still need only a linear number of linear morphs.

\begin{theorem}\label{thm:linearcomplex}
Let $\Gamma_I$ and $\Gamma_O$ be two orthogonal planar drawings of $G$, where $G$ is the unification of $\Gamma_I$ and $\Gamma_O$, and $\Gamma_I$ and $\Gamma_O$ have the same combinatorial embedding and the same outer boundary. Then we can morph $\Gamma_I$ to $\Gamma_O$ using $O(n)$ linear morphs while maintaining $O(n)$ complexity during the morph, where $n$ is the number of vertices of $G$.
\end{theorem}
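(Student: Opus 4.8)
The plan is to take the $O(n)$-linear-morph algorithm underlying Theorem~\ref{thm:linearmorph} and interleave it with a bend-cleanup phase so that the complexity of the intermediate drawings never exceeds $O(n)$, at the cost of only $O(1)$ additional linear morphs per iteration. I would set up exactly as in the proof of Theorem~\ref{thm:linearmorph}: fix a proper set of wires $W_\rightarrow, W_\downarrow$ in $\Gamma_I$ of spirality $O(n)$ (which exists by Section~\ref{sec:wires}), and repeatedly reduce the maximum absolute spirality of all wires by one. By Lemmata~\ref{lem:spirallinear} and~\ref{lem:independence} (and its $W_\downarrow$ analogue) this terminates after $O(n)$ iterations with all wires straight, after which $O(1)$ linear morphs complete the morph to $\Gamma_O$ as described in Section~\ref{sec:wires}. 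The two modifications to each iteration are what keep the complexity low.

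The first modification is to use Lemma~\ref{lem:reroute} in place of naively sliding every maximum-spirality link: for each edge crossed by several links of maximum absolute spirality $s$, perform the zigzag-eliminating slide on only one of those links and reroute the remaining crossing wires through the freshly created segment in an $\varepsilon$-band along the edge. By Lemma~\ref{lem:reroute} the global maximum absolute spirality still drops by one, the newly created over-spirality links (of absolute value $|s|+1$) can be eliminated with $O(1)$ further slides since they cross no edge, and — the key point — only $O(1)$ bends are introduced on each edge during the whole iteration. All of these slides are merged into $O(1)$ linear morphs via Lemma~\ref{lem:merge}.

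The second modification is the cleanup phase appended to each iteration. Here I would exploit the cell structure: every bend lies inside a cell containing a vertex, because a vertex-free cell contains no bends (all wires through an edge cross the same segment, by the invariant), and within a vertex's cell Lemma~\ref{lem:constantBendDifference} bounds the difference between the bend-counts of incident edges by an absolute constant. Hence, as soon as one incident edge of a vertex has accumulated at least two bends inside its cell, \emph{every} incident edge has $\Omega(1)$ bends there, and Lemma~\ref{lem:spiralityEdges} lets me remove one bend from each incident edge without disturbing the rest of the drawing; combining all cells gives $O(1)$ linear morphs. Running this after every iteration keeps each edge to $O(1)$ bends at all times, so the total complexity stays $O(n)$. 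Counting morphs: each iteration spends $O(1)$ slides on spirality reduction, $O(1)$ on rerouting and eliminating over-spirality links (Lemma~\ref{lem:reroute}), and $O(1)$ on bend removal (Lemma~\ref{lem:spiralityEdges}), all merged by Lemma~\ref{lem:merge}; summing over $O(n)$ iterations plus the $O(1)$ final morphs yields $O(n)$ linear morphs overall.

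I expect the main obstacle to be the applicability of the bend-removal step, i.e. justifying Lemma~\ref{lem:spiralityEdges} in full. For vertices of degree at most three a cyclic sequence of $O(1)$ zigzag-eliminating slides removes one bend per incident edge while preserving planarity; the delicate case is a degree-four vertex, where one must argue that all four incident edges turn the same way inside the cell. This follows by combining Corollary~\ref{cor:sameSpirality} (all links crossing incident edges of a degree-four vertex share the same spirality in $\Gamma_I$, forcing a common sign) with Lemma~\ref{lem:spiraltransfer} (left- and right-oriented bends of an edge are separated by the crossing wire, so the sign of the spirality dictates the turn direction throughout the morph), after which an $\varepsilon$-offset of all four incident segments creates a removable zigzag on each edge without introducing intersections. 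A secondary, purely bookkeeping concern is verifying that all the per-iteration slides really do merge into $O(1)$ planar, orthogonality-preserving linear morphs, which is exactly the content of Lemma~\ref{lem:merge} applied separately to positive- and negative-spirality links and to the two wire families.
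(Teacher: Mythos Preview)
Your proposal is correct and follows essentially the same approach as the paper: modify the iteration of Theorem~\ref{thm:linearmorph} by (i) using Lemma~\ref{lem:reroute} so that each iteration adds only $O(1)$ bends per edge, and (ii) appending a per-iteration cleanup via Lemma~\ref{lem:spiralityEdges}, with Lemma~\ref{lem:constantBendDifference} guaranteeing that the cleanup precondition is met once only $O(1)$ bends have accumulated in a cell. One small imprecision: the trigger for Lemma~\ref{lem:spiralityEdges} is that \emph{all} incident edges have at least two bends in the cell, not just one of them; Lemma~\ref{lem:constantBendDifference} only ensures this after a bounded number of further iterations, but this still keeps each edge to $O(1)$ bends and does not affect the morph count.
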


\section{Conclusion}
We described an algorithm that morphs between two planar orthogonal drawings of a connected graph $G$ using only $O(n)$ linear morphs while maintaining planarity and linear complexity of the drawing during the complete morph.
This answers the open question from Biedl~\etal~\cite{Biedl2013}.
As $\Omega(n)$ linear morphs are needed in the worst case, our algorithm is optimal for connected graphs.

Our current proofs only hold for connected graphs.
Specifically Lemma~\ref{lem:spirallinear} assumes that the graph is connected to argue that each cycle must intersect an edge.
By combining the results of Aloupis~\etal~\cite{Aloupis2015} with our work we also obtain an algorithm requiring only $O(n^{1.5})$ linear morphs for disconnected graphs, which still improves on the $O(n^2)$ bound of~\cite{Biedl2013}.
For future work we will investigate if the proofs can be changed to include disconnected graphs.

\newpage

\bibliography{bibliography.bib}
\clearpage

\renewcommand{\textfraction}{0.00}
\renewcommand{\bottomfraction}{1.0}
\renewcommand{\topfraction}{1.0}
\renewcommand{\floatpagefraction}{.9}

\end{document}